\newtheorem{theorem}{Theorem}
\newtheorem{lemma}[theorem]{Lemma}
\newtheorem{remark}[theorem]{Remark}
\newtheorem{example}[theorem]{Example}
\newtheorem{corollary}[theorem]{Corollary}
\newtheorem{definition}[theorem]{Definition}
\def\squarebox#1{\hbox to #1{\hfill\vbox to #1{\vfill}}}
\newcommand{\qed}{\hspace*{\fill}
       \vbox{\hrule\hbox{\vrule\squarebox{.667em}\vrule}\hrule}\smallskip}
\newenvironment{proof}{\begin{trivlist}
\item[\hspace{\labelsep}{\bf\noindent Proof: }]
}{\qed\end{trivlist}}
\newcommand{\short}[1]{}
\newcommand{\set}[1]{{\left\{#1\right\}}}
\newcommand{\NN}{\mathbbm{N}}
\newcommand{\NP}{\textbf{NP}\xspace}
\newcommand{\NPh}{\textbf{NP-Hard}\xspace}
\newcommand{\PTIME}{\textbf{P}\xspace}
\newcommand{\tup}[1]{\langle #1  \rangle}
\newcommand{\cost}{\mathsf{cost}}
\newcommand{\NEset}{\mathsf{NE}_\mathsf{PO}}
\mathchardef\mhyphen="2D
\newcommand{\ol}[1]{{\widetilde{#1}}}
\newcommand{\util}{\mathsf{util}}
\newcommand{\SO}{\mathsf{SO}}
\newcommand{\PoA}{\mathsf{PoA}}
\newcommand{\PoS}{\mathsf{PoS}}
\newcommand{\todo}[1]{\par\noindent{\raggedright\textsf{TODO$\uparrow$: #1}
		\par\marginpar{\Large \bf $\star$}}}
\newcommand{\stam}[1]{}
\renewcommand{\Game}{{\cal G}}
\newcommand{\Act}{\mathsf{Act}}
\newcommand{\outcome}{\mathsf{outcome}}
\newcommand{\coal}{\overline{\alpha}}
\newcommand{\strat}{{\mathfrak S}}
\begin{document}

\newcommand{\anotefrom}[2]{\textbf{[#1]: {#2}}}
\newcommand{\sanote}[1]{\anotefrom{sa}{#1}}
\newcommand{\ranote}[1]{\anotefrom{ra}{#1}}
\newcommand{\sbnote}[1]{\anotefrom{sb}{#1}}

\title{Equilibria in Quantitative Concurrent Games}
\author{Shaull Almagor$^1$, Rajeev Alur$^2$, and Suguman Bansal$^3$}
\date{\small
	$^1$ Department of Computer Science, Oxford University\\
$^2$ Department of Computer and Information Science, University of Pennsylvania\\
$^3$ Department of Computer Science, Rice University}
\maketitle

\begin{abstract}
Synthesis of finite-state controllers from high-level specifications in 
multi-agent systems can be reduced to solving multi-player concurrent 
games over finite graphs. The complexity of solving such games with 
qualitative objectives for agents, such as reaching a target set, is 
well understood resulting in tools with applications in robotics. In 
this paper, we introduce quantitative concurrent graph games, where 
transitions have separate costs for different agents, and each agent 
attempts to reach its target set while minimizing its own cost along the 
path. In this model, a solution to the game corresponds to a set of 
strategies, one per agent, that forms a Nash equilibrium. We study the 
problem of computing the set of all Pareto-optimal Nash equilibria, and 
give a comprehensive analysis of its complexity and related problems 
such as the price of stability and the price of anarchy. In particular, 
while checking the existence of a Nash equilibrium is NP-complete in 
general, with multiple parameters contributing to the computational 
hardness separately, two-player games with bounded costs on individual 
transitions admit a polynomial-time solution.
\end{abstract}
\section{Introduction}
\label{sec:intro}

The proliferation of massive online protocols such as auctions (Google Auctions, eBay), decentralized crypto-
currencies (Bitcoins), ride-sharing applications (Uber, Lyft), have propelled an interest in the {\em automated} design of {\em provably correct} multi-agent systems. 
{\em Synthesis}, pioneered by Church's Problem~\cite{church1957applications}, is a declarative paradigm for the automated design of provably correct systems. Synthesis is the automated construction of systems from their specifications.  
%The first solutions to synthesis date back decades now~\cite{buchi1969solving,rabin1972automata}. 
In the context of multi-agent system, synthesis constructs a {\em controller} that directs agent interactions in order to satisfy the specification. Among others, the synthesis of controllers finds vast application in motion planning in single- and multiple- robot systems~\cite{alur2016compositional,fainekos2009temporal,kress2009temporal,livingston2015cross,wang2016task}.
 
%In recent times, the synthesis of controllers for multi-agent systems has been a thriving area of research with applications motivated by in designing controllers for teams of robots. 

A {\em specification} for the synthesis of controller for 
a multi-agent systems consists two parts: First, a description of individual agent objectives;  Second, a description how agents interact  with each other.
Individual agent objectives are expressed in linear temporal logic over finite domain~\cite{de2015synthesis} as a de-facto in planning, or over infinite domain~\cite{pnueli1977temporal} for liveness and safety properties, and the like. 
Agent interactions are expressed by {\em graph games} in which vertices and edges of a graph denote game states and agent interactions, respectively.
Depending on whether agent interactions occur in turns or concurrently, the edges are graph game are labeled with single agent actions or concurrent agent actions.  
The synthesis of the controller under these specifications corresponds to solving the graph game with agent objectives.

The specifications considered in most existing work is {\em qualitative}. They do not take into account practical aspects such as cost of interaction, amount of resources of agents, and so on.
%The qualitative nature of agent interactions and their objectives, unfortunately, does not take into consideration practical aspects including agent incentives, costs of interactions and such. 
For example, crucial details in the design of a controller for a multi-robot surveillance task would include the  battery resource consumed by robots in traversing uneven terrains in the environment, total distance/area covered by each and all robots, and so on.  These details cannot be represented qualitatively.
%in the design of a motion plan for an obstacle course in an Amazon Robot challenge is the expenditure of  battery resource in traversing the terrains along the obstacle course. Similarly, the span of surveillance of a drone-run rescue operation is given by the distance/area covered by each drone. 

A richer form of specification is {\em quantitative}.
Agent interactions in a graph game should also include 
the {\em quantitative costs} incurred by agents during interactions (e.g. battery consumption). W.l.o.g, costs incurred by different agents along the same transition may differ. In addition, agents may be constrains on their resources (e.g. fixed battery life). Hence, agents may also have the {\em quantiative objective} to optimize their total cost while fulfilling their qualitative objective. 
%Each interaction in such scenarios results in agents incurring, possibly separate, {\em quantitative costs}. 
%Therefore, a controller in such applications must {\em also} optimize on these quantitative costs. 
Synthesis of controllers from quantitative specifications entails solving a quantitative graph games under qualitative and quantitative objectives for its agents. 
Finally, the objective of agents is to optimize its own cost, and not to play against other agents. {\em Nash equilibrium} is a popular choice of solution concept in such non-competitive games~\cite{nisan2007algorithmic}. Therefore, synthesis of controllers in quantitative games with quantitative agent objectives is reduced to Nash equilibria computation in these games. 

Intuitively, Nash equilibria assigns a strategy to each agent such that unilateral deviations by an agent are not beneficial to it~\cite{nash1951non}. 
The computation of Nash equilibrium has been extensively studied for simple one-shot games~\cite{chen2009settling,conitzer2008new,daskalakis2009complexity}, repeated games~\cite{abreu1988theory,andersen2013fast,mailath2006repeated}. 
The problem has been investigated on concurrent graph games~\cite{ummels2015pure} and turn-based quantitative games under reachability objective~\cite{brihaye2010equilibria}. 
Equilibria computation for quantitative concurrent games is open. 

This paper studies Nash equilibria computation for quantitative concurrent graph games with reachability objective. Each agent accumulates its cost until it reaches its set of target states in the graph game, and aims to minimize its cost. We show that determining the existence of Nash equilibria in such games is NP-complete, in general. Our proof argument follows that unilateral deviations for a Nash equilibria by an agent are punished by the coalition of the remaining agents. 
We also study the 
problem of computing the set of all Pareto-optimal Nash equilibria. We prove that the problem is exponential in the worst-case, and identify special cases that admit polynomial-time solutions. 
We also give a comprehensive analysis of the complexity of related problems 
such as the price of stability and the price of anarchy.

\section{Quantitative concurrent graph games}
\label{Sec:QuantGames}

\paragraph{Model description}
A \emph{quantitative concurrent graph game}, abbreviated to {\em game} or \textsf{QCG}, 
is a tuple $\Game = \tup{\Omega, V,  \{\Act_\alpha\}_{\alpha \in \Omega}, v_0, \delta,\cost,F}$, with set of players $\Omega = \{\alpha_1, \dots \alpha_k\},$ set of states $V$, initial state $v_o \in V$. The set of actions for player $ \alpha$ are given by $ \Act_\alpha$. The transition funciton is given by $\delta:V\times\prod_{\alpha \in \Omega} \Act_\alpha \rightarrow V$. 
Alternately, we use $(u, \overline{a}, w  ) \in \delta$ to mean $w  = \delta(u, \overline{a})$.
%\todo{Need $\Delta$?}
 %where  $\Omega=\set{\alpha_1,\ldots,\alpha_k}$ is a set of players, $V$s is a set of states, $\Act$ is a set of actions partitioned into sets of actions $\Act_\alpha$ for each player $\alpha$, $v_0 \in V$ is an initial state, and  $\delta: V \times \prod_{\alpha\in\Omega}\Act_\alpha \to V$ is a transition function, mapping a state and actions taken by the players to a successor state. 
 %We sometimes treat $\delta$ as a deterministic relation $\Delta\subseteq V \times (\prod_{\alpha\in\Omega}\Act_\alpha) \times V$.
 The cost function $\cost:V\times \prod_{\alpha\in\Omega}\Act_\alpha \times V \to \NN^\Omega$ assigns a cost vector $(\cost_\alpha)_{\alpha \in \Omega}$ to every transition in $\delta$, where the $\alpha$-th element denotes cost for Player $\alpha$. 
% We sometimes refer to $\cost$ as a vector $(\cost_\alpha)_{\alpha\in \Omega}$. 
 The target function $F:\Omega\to 2^{V}$ prescribes the target set of states of each player. 

An {\em outcome} of a game $\Game$ is a (finite or infinite) sequence of subsequent transitions beginning in the initial state. Concretely, the sequence $\rho = \tau_0,\tau_1\dots$, where  $\tau_i = (s_i, \vec{a_i}, t_i)\in \delta$ for all $i\geq 0$, is an outcome of the game if $s_0$ is the initial state, and for all $i\geq 0$, $s_{i+1} = t_i$. 
An outcome $\rho$ is said to visit state $s$ if there exists a $j \geq 0$ s.t. $s_j = s$.
An outcome is said to visit a set of states if it visits at least one member state of the set. 

%shaull - why do we want to define a reachability game? It sounds as though it's a different model than what we study, and it's not.
%In a {\em reachability game}, 
The cost of a player from an outcome is computed by an accumulation of costs incurred by the player along transitions in the outcome. 
In a {\em reachability game}, the cost of a player is the sum of costs incurred along the outcome until its target set is {\em visited for the first time}. 
%The cost of a player in an outcome is given by the sum of costs incurred along the outcome until the player's target set is {\em visited for the first time}. 
Formally, the cost of player $\alpha$ in the outcome $\rho$, denoted $\cost_\alpha(\rho)$, is $\Sigma_{j=0}^l \cost_\alpha(\tau_j) $ when $l\geq 0$ is the first index in $\rho$ at which $F(\alpha)$ is visited, and the cost is $\infty$ if $F(\alpha)$ is not visited in $\rho$. 
The objective of each player is to minimize its cost (and in particular, to reach its target). 

\paragraph{Nash equilibrium}
A \emph{strategy} for Player $\alpha$ is a function $\sigma_\alpha:\delta^*\to \Act_\alpha$ which decides the player's next action based on the history of 
%shaull
%actions taken by all players 
transitions taken 
so far\footnote{The careful reader may notice that the history of actions would have sufficed. We choose to work with the history of transitions for cleaner proofs.}.
The set of strategies of player $\alpha$ are denoted by $\strat{\alpha}$.
A strategy $\sigma_\alpha$ is \emph{memoryless} if, intuitively, it prescribes the next action depending only on the current state. That is, if for every two finite outcomes $\rho=\tau_0\cdots\tau_k$ and $\rho'=\tau'_0\cdots\tau'_m$ with $\tau_k=(s_k,\vec{a_k},t_k)$ and $\tau'_m=(s'_m,\vec{b_m},t'_m)$, if $t_k=t'_m$, then $\sigma_\alpha(\rho)=\sigma_\alpha(\rho')$. A memoryless strategy can thus be defined as $\sigma_\alpha:V\to \Act_\alpha$.

A \emph{profile} is a tuple of strategies $P=\tup{\sigma_\alpha}_{\alpha\in \Omega}$, where $\sigma_\alpha$ denotes a strategy for player $\alpha$. 
The profile $P$ induces an outcome, denoted $\outcome(P)$, in which every player conforms to $\sigma_\alpha$. Concretely, $\outcome(P)=\tau_0\tau_1\cdots$ with $\tau_i = (s_i, \vec{a_i}, s_{i+1})$ for every $i\ge 0$, where $s_0$ is the initial state, and for every player $\alpha\in \Omega$ it holds that $(\vec{a_0})_\alpha=\sigma_\alpha(\epsilon)$ and $(\vec{a_j})_\alpha=\sigma_\alpha(\tau_0\cdots \tau_{j-1})$ for every $j>0$.
We denote by $\outcome_u(P)$ the outcome of the profile $P$ in the game $\Game^u$ with initial state $u$.

The cost for player $\alpha$ in profile $P$, denoted by $\cost_\alpha(P)$, is the cost it receives in $\outcome(P)$.
The cost of a profile $P$, denoted $\cost(P)$, is the tuple $\tup{\cost_\alpha(P)}_{\alpha \in \Omega}$.

Let  $P[\alpha \gets \sigma'_\alpha]$ denote the profile obtained from profile $P$ when the strategy of Player $\alpha$ 
is unilaterally changed to $\sigma'_\alpha$.
A profile is in {\em Nash equilibrium}, {\em NE} in short, if no player can obtain a lower cost by unilaterally changing its strategy. 
\begin{definition}[Nash equilibria]

A profile $ = \tup{\sigma_\alpha}_{\alpha \in \Omega}$ is said to be in Nash equilibrium if for all players $\alpha \in \Omega$, and all strategies $\sigma'_\alpha \in \strat{\alpha}$ of player $\alpha$ it holds that
$\cost_\alpha(P) \leq \cost_\alpha(P[\alpha \gets \sigma'_\alpha])$. 
\end{definition}
We say an outcome $\pi$ is in NE if there exists a NE with outcome $\pi$.
%shaull - we can't say that a cost vector is IN an NE, since it's just numbers. 
A cost vector $\vec{c}\in \mathbb{N}$ is said to be an NE if there exists an NE $P$ for which $\vec{c}=\cost(P)$. 
A cost tuple $\vec{c}$ is a \emph{Pareto-optimal Nash equilibrium} if there does not exist a NE $\vec{d} \in \mathbb{N}^\Omega$ such that 
%shaull - has to be a different NE. 
$\vec{d}\neq \vec{c}$ and $\vec{d}\leq \vec{c}$. 
It is easy to see that a game with NE also has pareto-optimal NE. 
%The set of all Pareto-optimal NE cost-vectors in game $\Game$ is denoted by $\NEset(\Game)$. 

\paragraph{Examples}
%The study of Nash equilibria is interesting since there exist games without any NE (Example~\ref{xmp: XOR game}), and there also exists games with many NE (Example~\ref{xmp: exp many NE}). 
\begin{enumerate}
	\item 
	\label{xmp: XOR game}
	[No NE]
	Figure~\ref{Fig:Eg1} represents an XOR game
	with two players, states $\set{s,t}$, actions $\set{a,b}$ for both players, initial state $s$, target set $\set{t}$ for both players. The transition function and associated costs are shown in the figure.

	%is defined by $\delta(s,b_0,b_1)=\delta(t,b_0,b_1)=t$ for every action vector $(b_0,b_1)$. That is, the game starts in $s$, goes to $t$, and stays there.
	%The cost function (which is relevant only in the transition from $s$ to $t$, since $t$ is the target) is defined as follows: $\cost(s,b_0,b_1)=\begin{cases}
	%(1,0) & b_0\oplus b_1=0\\	
	%(0,1) & b_0\oplus b_1=1\\
	%\end{cases}$ (where $\oplus$ is the XOR operation). 
	It is easy to see that from every outcome of the game, one of the players can reduce their cost by flipping their actions. Hence, the game has no Nash equilibria. 
	\begin{figure*}[ht]
		\centering
			\begin{tikzpicture}[shorten >=1pt,node distance=2cm,on grid,auto, initial text =] 
			\centering
			\node[state,initial, minimum size = 1pt] (s_1)   {\footnotesize{$s$}}; 
			\node[state, accepting, minimum size = 1pt] (s_2) [right of = s_1] {\footnotesize{$t$}};
			
			\path[->] 
			(s_1)	edge  [bend left=10]  node  [align=center] {\footnotesize{$(a,a),(b,b)$}\\\footnotesize{cost = (0,1)}}  (s_2)
			(s_1)	edge  [bend right=10] node  [align=center] [below] {\footnotesize{$(a,b),(b,a)$}\\\footnotesize{cost = (1,0)}} (s_2);
			%(s_2)   edge  [loop right] node  {} (s_2);
			\end{tikzpicture}
			\caption{No NE}
			\label{Fig:Eg1}
	\end{figure*}
	
	%It is easy to see that for each outcome of the game, one of the players can deviate by flipping their bit, and thus changing the cost from $(0,1)$ to $(1,0)$, or vice-versa. It follows that there is no NE in the game.
	
	\item 
	\label{xmp: exp many NE}
	[Exponentially many NE]
	%Contrary to Example~\ref{xmp: XOR game}, the set $\NEset$ can be exponentially large,
	%(in the description of the game), and moreover - exponentially many incomparable cost vectors induced by NE, 
	%as we now show.
	Figure~\ref{Fig:Eg2} represents a two-player game, states $\set{s_0,\ldots,s_n,t}$, actions $\{a,b\}$ for both players, initial state $s_0$, target set $\{t\}$ for both players.
	The transition and cost functions are shown in the figure. 
	% The game proceeds as follows. At state $s_i$, for $0\le i<n$ the players play a XOR game (see Example~\ref{xmp: XOR game}) which leads to $s_{i+1}$ with cost of either $(0,2^i)$ or $(2^i,0)$.	At state $s_n$, if both players play $0$, the game proceeds to $t$ with cost $(0,0)$. If either player chooses $1$, the game proceeds to $t$ with cost $(2^n,2^n)$. 
	
	In this game there exists a NE with cost $(x,2^n-1-x)$ for all $0\le x\le  2^n-1$.
	If $(b_{n-1}\cdot b_0)_2$ is the binary expansion of $x$
	then Player 1 can {\em force} the outcome to take cost $(2^i,0)$ exactly when $b_i=1$, using the following strategy: 
	Both players declare that they will take action $a$ in state $s_n$ if cost $(2^i,0)$ is not taken exactly at $b_i = 1$. Taking action $a$ in $s_n$ will incur a cost of $(2^n,2^n)$, hence no agent has an incentive to deviate. 
	%Player 1 declares that if the aforementioned edges are not taken, then she will choose action $1$ at $s_n$, and otherwise she will choose $0$. Dually, player $2$ requires the same transitions to be taken, and will play $1$ at $s_n$ otherwise. Neither player has an incentive to deviate, as it will cause the cost to be $(2^n,2^n)$.
	\begin{figure*}[ht]
		\centering
			\begin{tikzpicture}[shorten >=1pt,node distance=2cm,on grid,auto, initial text =] 
			\centering
			\node[state,initial, minimum size = 1pt] (s_1)   {\footnotesize{$s_0$}}; 
			\node[state, minimum size = 1pt] (s_2) [right of = s_1] {\footnotesize{$s_1$}};
			\node[state, minimum size = 1pt] (s_3) [right of = s_2] {\footnotesize{$s_2$}};
			\node[state, minimum size = 1pt] (s_4) [right=1.2cm of s_3] {\footnotesize{$s_{n-1}$}};
			\node[state, minimum size = 1pt] (s_5) [right=2.7cm of s_4] {\footnotesize{$s_{n}$}};
			\node[state, accepting, minimum size = 1pt] (s_6) [right of = s_5] {\footnotesize{$t$}};
			
			\path[->] 
			(s_1)	edge  [bend left=10]  node  [align=center] {\footnotesize{$(a,a),(b,b)$}\\\footnotesize{cost = (0,1)}}  (s_2)
			(s_1)	edge  [bend right=10] node  [align=center] [below] {\footnotesize{$(a,b),(b,a)$}\\\footnotesize{cost = (1,0)}} (s_2)
			(s_2)	edge  [bend left=10]  node  [align=center] {\footnotesize{$(a,a),(b,b)$}\\\footnotesize{cost = (0,2)}}  (s_3)
			(s_2)	edge  [bend right=10] node  [align=center] [below] {\footnotesize{$(a,b),(b,a)$}\\\footnotesize{cost = (2,0)}} (s_3)
			(s_3)	edge  [dashed] node  {} (s_4)
			%(s_2)   edge  [loop right] node  {} (s_2);
			(s_4)	edge  [bend left=10]  node  [align=center] {\footnotesize{$(a,a),(b,b)$}\\\footnotesize{cost = $(0,2^{n-1})$}}  (s_5)
			(s_4)	edge  [bend right=10] node  [align=center] [below] {\footnotesize{$(a,b),(b,a)$}\\\footnotesize{cost = $(2^{n-1},0)$}} (s_5)
			(s_5)	edge  [bend left=10]  node  [align=center] {\footnotesize{$(a,a),(a,b), (b,a)$}\\\footnotesize{cost = $(2^n,2^n)$}}  (s_6)
			(s_5)	edge  [bend right=10] node  [align=center] [below] {\footnotesize{$(b,b)$}\\\footnotesize{cost = $(0,0)$}} (s_6);
			\end{tikzpicture}
			\caption{Exponentially many NE}
			\label{Fig:Eg2}
	\end{figure*}
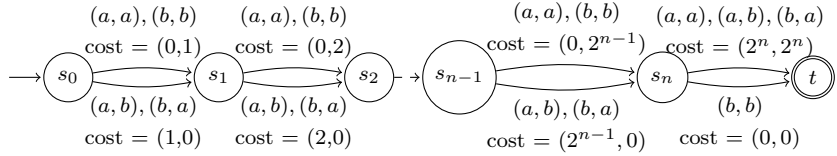
	\item 
	\label{xmp:InfNEfinitePNE}
	[Infinite NE but one Pareto-optimal NE]
	Figure~\ref{Fig:InfFinite} is a two player game with states $\{s,t,\mathsf{sink}\}$, actions $ \{a,b \}$ for both players, initial state $s$, and target set $\{t\}$ for both players. Transitions and costs are as shown in Figure~\ref{Fig:InfFinite}, and missing transitions from s go to $\mathsf{sink}$.
	
	It is easy to see that an outcome of the form $(a,a)^k(b,b)$ is a NE with cost vector $(k+1,k+1)$ for $k\geq 0$. Clearly, there is only one Pareto-optimal NE i.e. $(1,1)$. 
	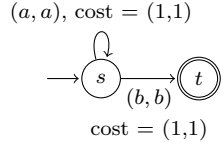
\begin{figure*}[ht]
			\centering
			\begin{tikzpicture}[shorten >=1pt,node distance=1.3cm,on grid,auto, initial text =] 
			\centering
			\node[state,initial, minimum size = 1pt] (s_1)   {\footnotesize{$s$}}; 
			\node[state, accepting, minimum size = 1pt] (s_2) [right of = s_1] {\footnotesize{$t$}};
			
			\path[->] 
			(s_1)	edge  [loop above] node  {\footnotesize{$(a,a)$, cost = (1,1)}}  (s_1)
			(s_1)	edge  node  [align=center] [below] {\footnotesize{$(b, b)$}\\\footnotesize{cost = (1,1)}} (s_2);
			%(s_2)   edge  [loop right] node  {} (s_2);
			\end{tikzpicture}
			\caption{Infinite NE}
			\label{Fig:InfFinite}
	\end{figure*}
\end{enumerate}

%	Observe that all player actions are allowed from every state. Wethe actions are independent of the state. That is, all actions are enabled in all states. This is a mere convenience, and all our results remain valid when actions may be disabled in some states.

%\section{Properties of NE and Stability}
%Before delving into out study of NE, we demonstrate some properties of NE and of stability in concurrent games with costs.

%\subsection{Problems Relating to NE}
%\label{subsec: the problems}
\paragraph{Problem formulation}
Examples~\ref{xmp: XOR game}-~\ref{xmp:InfNEfinitePNE} give rise to the following questions pertaining to Nash equilibria in QCGs. 

\begin{enumerate}[label={P\arabic*}]
%\begin{enumerate}
\item {\em Existence problem}: Does a given QCG have an NE?

\item \label{Prob:Bounded} {\em Bounded Pareto-optimal NE}: Are the number of Pareto-optimal NE in a QCG bounded?

\item \label{Prob:Compute} {\em Computation problem}: If~\ref{Prob:Bounded} holds, can one compute the set of all Pareto-optimal NE cost-vectors?

\item {\em Threshold problem}: The decision version of~\ref{Prob:Compute} checks whether given a QCG and a cost vector $\vec{c}$, does there exist an NE with cost vector $\vec{d}$ such that $\vec{d}\leq\vec{c}$?
\end{enumerate}
This paper analyses each one of the above stated problems. 
For simplicity, all player actions are enabled in every state in the game
%we allow all player actions to be allowed from every state in the game. 
Our definitions and results remain valid when some actions may be disabled in some states. 
%
%In this paper, we study problems relating to finding NE in concurrent games with costs.
%The most fundamental problem is the \emph{NE existence problem}: given a game $\Game$, decide whether there exists an NE in $\Game$. The most general problem is the \emph{All-NE computation problem}: given a game $\Game$, compute $\NEset(\Game)$. The decision version of the latter problem is the \emph{NE with threshold problem}: given a game $\Game$ and a cost vector $\vec{c}$, decide whether there exists an NE $\vec{d}$ such that $\vec{d}\le \vec{c}$.

\paragraph{Representation of games}
In this paper, we distinguish between when the number of players is fixed (such as 2-player game) and when it is given as part of the input. 

The size of the transition function and cost function play a crucial role in the analysis of QCGs. Consider a game in which each player chooses from at least two actions. The number of transitions in this game is exponential in the number of players. Hence, a naive tabular representation of the transition and cost functions is exponential in size of the number of players. This encoding for the transition and cost functions leads to skewed analysis, and precludes polynomial time reductions to problems involving games with multiple players with more than one action to choose from. 

To this end, we assume that the transition function is encoded by a circuit, and in particular a model that can be efficiently queried. 
For example, consider a state $u$ from which there is a transition to state $v$ if all players play $a$, and to state $w$ otherwise. 
This is succinctly represented by the circuit implementing (\textsc{if} $\vec{a}$ \textsc{then} $v$, \textsc{else} $w$).
Our algorithms work in polynomial time in the size of these circuits, and our hardness proofs are able to output them. 
%For example, consider a state $u$ in which, if all players play action $a$, the play proceeds to $v$, and otherwise to $v'$. 
 
The cost function is also encoded using similar circuits. The representation the cost values in unary or binary can cause a difference, and hence will be explicitly mentioned. 

%We use a circuit-based representation for the cost function as well. In addition, the representation of  natural-numbered costs could be in binary or unary, and wil be explicitly stated in the analysis.
%The represntation of individual costs will be stated explicitly.
%As we will see in our analysis, representation of costs has an implication on the analysis. Hence, the representation of individual costs is stated explicitly throughout the paper. 

Finally, the remaining components of the game, namely states, actions, and accepting sets are encoded naturally as part of the input.

\section{Characterization of NE}
\label{sec: characterization of NE}
In this section we give a characterization of (Pareto optimal) NE, by showing that they are attained by strategies with a special structure. This provides intuition on the behavior of players in an NE, and forms the basis for the algorithm described in Section~\ref{sec: computing NE}.

\subsection{Game Against $\alpha$}
\label{subsec: game against m}
Consider a game $\Game = \tup{\Omega, V,  \Act, v_0, \delta,\cost,F}$ and a player $\alpha\in \Omega$. We define the \emph{game against $\alpha$}, denoted $\Game_{\widehat{\alpha}}$, to be the two-player concurrent game obtained from $\Game$ as follows. The players are $\alpha$ and the \emph{coalition} $\coal$, which comprises the set $\Omega\setminus\set{\alpha}$ of all other players. The goal of Player $\alpha$ is to minimize the cost prescribed by $\cost_\alpha$ until reaching $F_\alpha$, and the goal of the coalition is to either prevent Player $\alpha$ from reaching $F_\alpha$, or to maximize the cost prescribed by $\cost_\alpha$ until reaching $F_\alpha$.

For every state $u\in V$, let $C_\alpha(u)$ be the maximal value that can be guaranteed by the coalition in $\Game^u$. Formally, $C_\alpha(u)=\max_{\sigma\in \strat_{\Omega\setminus \set{\alpha}}}$ $\min_{\tau\in \strat_\alpha}$ $ \cost_\alpha(\outcome_u(\sigma,\tau))$.\footnote{A priori, the $\max$ should be $\sup$. However, as we shall see in Theorem~\ref{thm: compute punishing game value}, memoryless strategies suffice, and hence $C_\alpha(u)$ is always attained.}

\begin{remark}
	The reader may wonder why we look at the optimal value for the coalition, and not for Player $\alpha$. 
	Intuitively, we use the game against $\alpha$ to allow the coalition to ``punish'' Player $\alpha$ for deviating from a suggested profile (supposedly an NE). Thus, we must fix the punishing strategy for the coalition before knowing how Player $\alpha$ deviates. $C_\alpha(u)$ is then the maximal punishment against Player $\alpha$.
\end{remark}

We start by showing that $C_\alpha(u)$ is computable in polynomial time for every $\alpha\in \Omega$ and every $u\in V$.

\begin{theorem}
	\label{thm: compute punishing game value}
	Consider a game $\Game$ with costs represented in binary. $C_\alpha(u)$ is computable in polynomial time for every $\alpha\in \Omega$ and every $u\in V$. Moreover, $C_\alpha(u)$ is attained by a memoryless strategies for both players (independent of $u$). 
\end{theorem}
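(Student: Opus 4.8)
The plan is to recognize the game against $\alpha$ as a two-player zero-sum \emph{min-cost reachability game} with non-negative integer weights, in which the coalition $\coal$ is the maximizer (it either keeps Player $\alpha$ out of $F_\alpha$, earning cost $\infty$, or drives up $\cost_\alpha$) and $\alpha$ is the minimizer, and then to establish the structure theory of such games. The first observation is that the $\max_\sigma\min_\tau$ form of $C_\alpha(u)$ means the coalition commits to its strategy $\sigma$ before $\alpha$ replies; since $\alpha$ then knows $\sigma$ and the history, $\alpha$ in fact knows the coalition's current (history-determined) move, so $C_\alpha(u)$ coincides with the value of the turn-based game in which at each step the coalition first reveals its joint action $c\in\prod_{\beta\neq\alpha}\Act_\beta$ and $\alpha$ then picks $d\in\Act_\alpha$. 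It therefore suffices to establish, for this turn-based weighted reachability game, (i) the Bellman optimality equations, (ii) memoryless optimal strategies for both players, uniform in $u$, and (iii) a polynomial-time algorithm.

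First I would separate the qualitative layer: compute by a standard attractor computation the region $\mathrm{Win}$ of states from which $\alpha$ can force a visit to $F_\alpha$; this is a reachability game, solvable in polynomial time with memoryless strategies, and $C_\alpha(u)=\infty$ exactly for $u\notin \mathrm{Win}$. On $\mathrm{Win}$ the value is finite and I claim it is the unique solution of
\[
C_\alpha(u)=\begin{cases}0 & u\in F_\alpha,\\[2pt]\displaystyle\max_{c}\min_{d}\big(\cost_\alpha(u,(c,d),w)+C_\alpha(w)\big)& u\notin F_\alpha,\end{cases}
\]
where $w=\delta(u,(c,d))$ and $(c,d)$ denotes the joint action assembled from $c$ and $d$. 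Memoryless optimality and independence of $u$ follow by reading off, at every state, a coalition action attaining the outer $\max$ and (against that fixed memoryless coalition strategy) an $\alpha$-action attaining the inner $\min$: a potential-function induction along plays shows the coalition strategy guarantees at least $C_\alpha(u)$ against every $\alpha$-strategy, while $\alpha$'s memoryless best response attains exactly $C_\alpha(u)$, from every $u$ at once. Note that because $\alpha$ replies to a \emph{fixed} memoryless coalition strategy, its best response depends only on the current state, so it is a legitimate memoryless strategy of the original concurrent game even though $\alpha$ cannot observe the coalition's current action. The same induction, applied to the lasso induced by the two memoryless strategies (which must reach $F_\alpha$ before repeating a state, since the value is finite), shows the realized path is simple; this both justifies that the $\max$ is attained (the footnote's point) and bounds every finite value by $(|V|-1)\cdot W$, where $W$ is the largest value taken by $\cost_\alpha$, a quantity of polynomial bit-length in the binary encoding.

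For the computation I would solve the Bellman system outward from $F_\alpha$ by finalizing states in increasing order of value (a Dijkstra-style sweep): non-negativity of $\cost_\alpha$ guarantees that the successor chosen by $\alpha$'s optimal reply has value no larger than the current state, so the values can be finalized in $|V|$ rounds (with the usual care for zero-weight ties). The step I expect to be the crux is the per-state evaluation of $\max_c\min_d(\cdots)$: the coalition's action $c$ ranges over the product set $\prod_{\beta\neq\alpha}\Act_\beta$, which is exponential in the number of players and is presented only through the transition and cost circuits, so the real work is to argue this inner optimization is polynomial (immediate when the number of players is fixed, and otherwise to be extracted from the circuit representation) and to confirm that $\max\min=\min\max$, so that ``memoryless for both players'' is meaningful. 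The remaining verifications---correctness of the attractor step, the potential-function induction, and the tie handling in the sweep---are routine.
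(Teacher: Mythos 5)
Your proposal is correct in substance and reaches the same structural conclusions as the paper, but by a somewhat different route. The paper also exploits the fact that the $\max_\sigma\min_\tau$ ordering lets Player~$\alpha$ respond to the coalition's current move, but it does so \emph{inside} an induction rather than by an upfront turn-based reformulation: it runs the value iteration $T_{i+1}(v)=\max_{b}\min_{a}\bigl(\cost_\alpha(v,a,b)+T_i(\delta(v,a,b))\bigr)$ from $T_0=0$ on $F_\alpha$ and $\infty$ elsewhere, proves by induction that $T_i$ equals the bounded-horizon value $S_i$ (the key step being exactly your observation, used to justify swapping $\min_a$ with $\max_\sigma$ because $\sigma$ does not influence the one-step cost), and shows convergence within $|V|$ rounds via the same simple-path argument you give. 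So where you propose a Dijkstra-style sweep with a potential-function certificate, the paper uses Bellman--Ford-style value iteration; both are sound for non-negative weights, though your claim that the Bellman system has a \emph{unique} solution on the winning region is slightly too strong in the presence of zero-cost cycles and is not needed --- it is cleaner to argue, as the paper does, that the iteration from the $\infty$-initialization converges to the value. The qualitative layer is handled the same way (the paper invokes concurrent reachability results of de Alfaro et al.\ where you use an attractor in the turn-based collapse), and the extraction of memoryless strategies --- the coalition's from the fixed point, $\alpha$'s as a shortest-path best response in $\Game^\sigma$ --- is identical. One small correction: the theorem does not assert determinacy, so you need not confirm $\max\min=\min\max$; ``memoryless for both players'' only requires a memoryless maximizer attaining the outer $\max$ together with a memoryless best response to it, which is exactly what you construct.

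The one step you flag but do not discharge --- evaluating $\max_{b\in\Act_{\coal}}\min_{a\in\Act_\alpha}(\cdot)$ when $\Act_{\coal}$ is an exponentially large product presented only through a circuit --- is a genuine issue, but it is not one the paper resolves either: the update in Equation~\ref{eq: C algorithm update} quantifies over all $b\in\Act_{\coal}$ with no indication of how to do so in time polynomial in the circuit size when the number of players is unbounded. So this omission does not separate your argument from the paper's; for a fixed number of players, or whenever the joint actions are listed explicitly, both proofs are complete as written.
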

\begin{proof}
		For a strategy $\sigma\in \Pi_{\coal}$, denote by $\Game^\sigma$ the weighted (possibly infinite) graph obtained from $\Game$ by applying the actions prescribed by $\sigma$. The elements of $\Game$ such as $\cost$ and $F$ are naturally extended to $\Game^{\sigma}$.
	
	Observe that it suffices to prove memoryless strategies exist for the coalition. Indeed, once a memoryless strategy $\sigma$ is fixed by the coalition, the corresponding strategy for Player $\alpha$ is to choose the minimal-cost path to $F_\alpha$ (w.r.t. $\cost_\alpha$) in $\Game^{\sigma}$ (which, if $\sigma$ is memoryless, has $|V|$ states), which is clearly implemented by a memoryless strategy. 
	
	Let $U\subseteq V$ be the set of states from which the coalition can force the game never to reach $F_\alpha$. Using the results of~\cite{dAH00} on concurrent reachability games, we can compute $U$ in polynomial time, and moreover - a memoryless strategy suffices for the coalition to keep the game in $V\setminus F_\alpha$ (and clearly in this case, the strategy of Player $\alpha$ is irrelevant, as $C_\alpha(u)=\infty$ for $u\in U$). We henceforth assume that $U$ has already been computed. We remark that this assumption is not actually needed, as our algorithm will also compute this set as a by-product, but it slightly simplifies the correctness proof.
	
	We now describe an algorithm to compute $C_\alpha(u)$ for every state $u\in V\setminus U$. The algorithms stores a value $T(v)$ for every state $v\in V$, which is updated in every iteration. We refer to $T_i(v)$ as the state of $T(v)$ in iteration $i$ of the algorithm.
	
	Initially, $T_0(v)=0$ for $v\in F_\alpha$ and $T_0(v)=\infty$ otherwise.
	In every iteration, we make the following update to every state:
	\begin{equation}
	\label{eq: C algorithm update}
	T_{i+1}(v)=\max\set{\min\set{\cost_\alpha(v,a,b)+T_i(\delta(v,a,b)): a\in \Act_\alpha}:b\in \Act_{\coal}}
	\end{equation}
	
	The algorithm halts once a fixpoint has been reached, namely when $T_{i+1}\equiv T_i$. At every iteration $i$, we associate with $T_i(v)$ strategies $\mu_i\in \Pi_{\coal}$ and $\nu_i\in \Pi_\alpha$ that are obtained
	
	We now turn to prove that the algorithm terminates within $|V|$ iterations, and that upon termination, we have $T_{|V|}(v)=C_\alpha(v)$ for every state $v$.
	To this end, we prove the following inductive invariant: Let 
	\[
	S_i(v)=\max_{\sigma\in \Pi_{\coal}}\set{\min\set{\cost_\alpha(\outcome_v(\sigma,\tau)): 
			\tau\in \Pi_\alpha \text{ and }|\outcome_v(\sigma,\tau)|\le i}},
	\]
	where $|\outcome_u(\sigma,\tau)|$ is the number of transitions along the path until $F_\alpha$ is reached (and is $\infty$ is $F_\alpha$ is not reached). We claim that at iteration $i$, we have $S_i(v)=T_i(v)$.
	That is, $T_i(v)$ is the maximal value that the coalition can guarantee of a cheapest path to $F_\alpha$ of length at most $i$ (there may be longer yet cheaper paths).
	
	For $i=0$ this is trivial to observe: in $F_\alpha$ the coalition can guarantee $0$, and everywhere else $\infty$.
	Assume that the claim is correct for $i$, we prove for $i+1$. 
	For readability, in the following we always have $a\in \Act_\alpha$, $b\in \Act_{\coal}$, $\tau,\tau'\in \Pi_\alpha$, $\sigma,\sigma'\in \Pi_{\coal}$, and $v'=\delta(v,a,b)$.
	By the induction hypothesis, we have that
	\begin{align}
	&T_{i+1}(v)=\max_b \set{\min_a \set{\cost_\alpha(v,a,b)+S_i(\delta(v,a,b))}}\\
	=&\max_b \set{\min_a \set{\cost_\alpha(v,a,b)+\max_{\sigma}\set{\min_{\tau}\set{\cost_\alpha(\outcome_{v'}(\sigma,\tau)):|\outcome_{v'}(\sigma,\tau))|\le i}}}}\\
	=&\max_b \set{\min_a \set{\max_{\sigma}\set{\min_{\tau}\set{\cost_\alpha(v,a,b)+\cost_\alpha(\outcome_{v'}(\sigma,\tau)):|\outcome_{v'}(\sigma,\tau))|\le i}}}}\\
	=&\max_b \set{\max_{\sigma}\set{\min_a \set{\min_{\tau}\set{\cost_\alpha(v,a,b)+\cost_\alpha(\outcome_{v'}(\sigma,\tau)):|\outcome_{v'}(\sigma,\tau))|\le i}}}}\\
	=&\max_{\sigma'} \set{\min_{\tau'} \set{\cost_\alpha(v,\sigma'(v),\tau'(v))+\cost_\alpha(\outcome_{v'}(\sigma',\tau')):|\outcome_{v'}(\sigma',\tau'))|\le i}}\\
	=&\max_{\sigma'} \set{\min_{\tau'} \set{\cost_\alpha(\outcome_{v}(\sigma',\tau')):|\outcome_{v}(\sigma',\tau'))|\le i+1}} = S_{i+1}(v)
	\end{align}
	where the transitions are as follows:
	\begin{enumerate}
		\item[(1)-(2)] is by definition of $S_i(\delta(v,a,b))$.
		\item[(2)-(3)] is by distributivity of $\min$ and $\max$ over $+$.
		\item[(3)-(4)] is the heart of the proof. Trivially, we can write $(3)\ge (4)$ (since $\min \max$ is less than $\max \min$). 
		For the converse inequality, notice that in $(3)$ the coalition chooses a maximizing strategy $\sigma$ given the action $a$. However, $\sigma$ does not play a role in $\cost_\alpha(v,a,b)$. Therefore, the coalition may as well choose a strategy $\sigma$ that maximizes $\cost_\alpha(\outcome_{v'}(\sigma,\tau))$ for every $v'$ (which is determined by $a$). This new strategy is now independent of $a$, so we can maximize it before minimizing over $a$, as done in $(4)$, and the coalition is guaranteed not to reduce the cost.
		\item[(4)-(5)] is an aggregation of the first action with the rest of the strategy.
		\item[(5)-(6)] is by the definition of the cost of an outcome.
	\end{enumerate}
	We conclude that $S_i\equiv T_i$ for all $i$. 
	
	We now prove that the algorithm reaches a fixed point within $|V|$ iterations.
	Consider a state $v\in V\setminus U$. By definition, for every strategy $\sigma\in \Pi_{\coal}$, Player $\alpha$ has a strategy to reach $F_\alpha$ from $v$. For every such strategy $\sigma$, consider $\mu\in \arg\min_{\tau\in \Pi_{\coal}} \cost_\alpha(\outcome_u(\sigma,\tau))$, then w.l.o.g. we can assume $\outcome(\sigma,\mu)$ is a simple path (i.e. no state is visited more than once). Indeed, if a state is visited twice, then Player $\alpha$ can shorten the path without increasing the cost. Moreover, this simple path has minimal cost among all available paths from $v$ to $F_\alpha$ in $\Game^\sigma$. 
	
	We thus have that
	\begin{align*}
	&C_\alpha(v)=\max_{\sigma\in \Pi_{\coal}}\min_{\tau\in \Pi_\alpha} \cost_\alpha(\outcome_v(\sigma,\tau))\\
	&=\max_{\sigma\in \Pi_{\coal}}\set{\min\set{\cost_\alpha(\outcome_v(\sigma,\tau)): 
			\tau\in \Pi_\alpha \text{ and }|\outcome_v(\sigma,\tau)|\le |V|}}= S_{|V|}(v).
	\end{align*}
	Since $S_{|V|}(v)=T_{|V|}(v)$, we conclude that the algorithm terminates within $|V|$ iterations, and computes $C_\alpha(v)$ for every state $v\in V$. Moreover, when $T_i$ reaches a fixed point, we can extract from Equation~\ref{eq: C algorithm update} a memoryless strategy for the coalition, by choosing the maximizing action at each state.
\end{proof}
%\begin{proof}[Sketch]
%Intuitively, since the strategy for for the coalition is fixed before the strategy for Player $\alpha$, the obtained game has a turn-based flavor, rather than a concurrent one. Since costs are accumulated every transition, this turn based game can be unwound to proceed in a turn based fashion for every transition. In addition, we can bound the length of the unwinding, since if the coalition can force a cycle in the grap without visiting $F(\alpha)$, then $C_\alpha(u)=\infty$.
%
%We use this intuition to adapt a minimax-style algorithm for computing memoryless stratregies for both players.
%\end{proof}
\subsection{Characterizing NE}
\label{subsec: characterize NE}
%In this section, we show that Pareto-optimal NE cost vectors are attained by strategies with a special structure. 
We are now ready to characterize Pareto-optimal NE profiles. We show that it is enough to consider strategies of a the following form: intuitively, the players agree on a short outcome and play according to it. If Player $\alpha$ deviates from the outcome, the other players form a coalition and play according to $\Game_{\widehat{\alpha}}$, as per Section~\ref{subsec: game against m}.

Let $\Game = \tup{\Omega, V,  \Act, v_0, \delta,\cost,F}$, and consider a strategy profile $P=\tup{\sigma_1,\ldots,\sigma_k}$. For every player $\alpha\in \Omega$, consider the game $\Game_{\widehat{\alpha}}$. The optimal strategy for the coalition in $\Game_{\widehat{\alpha}}$ induces a strategy $\chi^\alpha_\beta$ for every $\beta\neq \alpha$, such that the combination of these strategies forms the strategy for the coalition. 

Let $\pi=\outcome(P)=(v_0,\vec{a}_0,v_1),(v_1,\vec{a}_1,v_2),\ldots$. We define a new strategy profile $\ol{P}=\tup{\ol{\sigma_1},\ldots,\ol{\sigma_k}}$ as follows. For every $\beta\in \Omega$, as long as all other players follow $\pi$, Player $\beta$ plays according to $\sigma_i$. If, at time $i$, Player $\alpha\neq \beta$ deviates from $\pi$ such that instead of transition $(v_i,\vec{a}_i,v_{i+1})$, the transition that is taken is $(v_i,\vec{a}',v')$, then Player $\beta$ starts playing $\chi^\alpha_\beta$ from $v'$.

Clearly $\outcome(P)=\outcome(\ol{P})$. We refer to the profile $\ol{P}$ as a \emph{second--strike} profile.

\begin{lemma}
	\label{lem: second strike enough for NE}
	If $P$ is an NE, then $\ol{P}$ is also an NE.
\end{lemma}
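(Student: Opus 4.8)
The plan is to exploit that $\outcome(\ol P)=\outcome(P)$, so $\cost_\alpha(\ol P)=\cost_\alpha(P)$ for every player $\alpha$, and then to show that no unilateral deviation from $\ol P$ can help. Fix a player $\alpha$ and an arbitrary strategy $\sigma'_\alpha$, and set $\rho=\outcome(\ol P[\alpha\gets\sigma'_\alpha])$; the goal is $\cost_\alpha(\rho)\ge \cost_\alpha(P)$. I would first observe that, since every $\beta\neq\alpha$ sticks to $\ol{\sigma_\beta}$, the outcome $\rho$ agrees with $\pi$ up to the first moment player $\alpha$ breaks away from $\pi$. If $\alpha$ never breaks away then $\rho=\pi$ and $\cost_\alpha(\rho)=\cost_\alpha(P)$, and we are done; hence assume $\alpha$ first deviates at some time $i$, producing a transition $\tau'_i=(v_i,\vec a',v')$ in which, crucially, every coalition member is still playing its $\pi$-action (they only switch \emph{after} seeing the deviation), so $\vec a'$ differs from $\vec a_i$ only in the $\alpha$-coordinate.

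From time $i+1$ onward each $\beta\neq\alpha$ plays $\chi^\alpha_\beta$ from $v'$, so the tail of $\rho$ is an outcome of $\Game_{\widehat\alpha}^{\,v'}$ in which the coalition plays its optimal punishing strategy $\chi^\alpha$ (uniform and memoryless by Theorem~\ref{thm: compute punishing game value}). By the definition of $C_\alpha$ and the optimality of $\chi^\alpha$, every response of player $\alpha$ accrues at least $C_\alpha(v')$ from $v'$ on. Writing $W_i=\sum_{j=0}^{i-1}\cost_\alpha(\tau_j)$ for the cost of the common prefix, I then split into two subcases: either $F_\alpha$ is already visited inside the prefix $v_0,\dots,v_i$ shared by $\rho$ and $\pi$, in which case $\cost_\alpha(\rho)=\cost_\alpha(P)$ because the cost is locked in at the first visit; or $F_\alpha$ is not yet visited, and then $\cost_\alpha(\rho)\ge W_i+\cost_\alpha(\tau'_i)+C_\alpha(v')$.

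It remains to show $\cost_\alpha(P)\le W_i+\cost_\alpha(\tau'_i)+C_\alpha(v')$, and this is the step I expect to be the crux. I would prove it by transferring the deviation back to the \emph{original} profile $P$: let $\hat\sigma_\alpha$ follow $\pi$ up to time $i$, play the deviating action $(\vec a')_\alpha$ at time $i$ (which, since the others follow $\sigma_\beta$ on the $\pi$-consistent history, again yields the transition $\tau'_i$ to $v'$), and thereafter play a cheapest response to the coalition's induced strategy $\sigma_\coal$ in $\Game^{v'}$. Since $P$ is an NE, $\cost_\alpha(P)\le \cost_\alpha(P[\alpha\gets\hat\sigma_\alpha])=W_i+\cost_\alpha(\tau'_i)+\min_{\tau\in\Pi_\alpha}\cost_\alpha(\outcome_{v'}(\sigma_\coal,\tau))$. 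The decisive observation is that $\sigma_\coal$ is merely one particular coalition strategy whereas $C_\alpha(v')$ maximizes over all of them, so $\min_{\tau\in\Pi_\alpha}\cost_\alpha(\outcome_{v'}(\sigma_\coal,\tau))\le C_\alpha(v')$; chaining the three inequalities gives $\cost_\alpha(P)\le \cost_\alpha(\rho)$. This same chain absorbs the case $\cost_\alpha(P)=\infty$ for free, since it forces $C_\alpha(v')=\infty$ and hence $\cost_\alpha(\rho)=\infty$. The main obstacle is precisely this last comparison: an \emph{un-punished} best response from $v'$ can only be cheaper than the \emph{maximal} punishment $C_\alpha(v')$, which is exactly the content of the $\max$ over coalition strategies in the definition of $C_\alpha$, and is what makes the second-strike threat credible enough to preserve the equilibrium.
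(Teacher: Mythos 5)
Your proposal is correct and follows essentially the same route as the paper's proof: bound the punished deviation from below by $C_\alpha(v')$ via the optimality of the coalition's punishing strategy, then transfer the same deviation back to the original profile $P$, where a best response against the coalition's actual (non-punishing) continuation costs at most $C_\alpha(v')$ because $C_\alpha$ maximizes over all coalition strategies, and invoke the NE property of $P$. The only differences are presentational — you argue directly rather than by contradiction, and you handle the already-reached-$F_\alpha$ and infinite-cost edge cases slightly more explicitly than the paper does.
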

\begin{proof}
	We prove that $\ol{P}$ is an NE by showing that no player can beneficially deviate. 
	%We prove this for deviations of Player 1, the proof for the other players is symmetric.
	
	Let $P=\tup{\sigma_1,\ldots,\sigma_k}$. Assume by way of contradiction that $\outcome(\ol{P})$ is not an NE. Thus, there exists some player $\alpha\in \Omega$ and a strategy $\sigma'$ for Player $\alpha$ that is a beneficial deviation from $\ol{P}$. That is, let $d=\cost_\alpha(\outcome(P))=\cost_\alpha(\outcome(\ol{P}))$ and  $\ol{d}=\cost_\alpha(\outcome(\ol{P}[m\gets \sigma']))$, then $\ol{d}<d$.
	
	Let $\pi=\outcome(P)=(v_0,\vec{a}_0,v_1),(v_1,\vec{a}_1,v_2),\ldots$, and $\ol{P}=\outcome(\ol{P}[1\gets \sigma'])=(v_0,\vec{a}_0,v_1),\ldots$ $ (v_{i-1},\vec{a}_{i-1},v_i),(v_i,\vec{b},u_{i+1}),(u_{i+1},\vec{\lambda}_{i+1},u_{i+2}),\ldots$ where $i$ is the minimal index such that $\vec{a}_i\neq \vec{b}_i$, i.e., the first time when Player $\alpha$ deviates from $\sigma_\alpha$, and $\lambda_i$ are played as per the second-strike strategies of $\Omega\setminus\{\alpha\}$.
	Define $c=\cost_\alpha((v_0,\vec{a}_0,v_1),\ldots, (v_{i-1},\vec{a}_{i-1},v_i))$ to be the cost accumulated by Player $\alpha$ along $\pi$ up to $v_{i}$, and $e=\cost_\alpha((v_i,\vec{a}_i,v_{i+1}),\ldots)$ be the cost accumulated on the suffix from $v_{i}$ along $\pi$ (recall that once $F_\alpha$ is reached, the cost does not accumulate, and is finite). Thus, $d=c+\cost_\alpha(v_{i-1},\vec{a},v_i)+e$. Similarly, let $\ol{e}=\cost_\alpha((u_{i+1},\vec{\lambda}_i,u_{i+2}),(u_{i+2},\vec{\lambda}_{i+2},u_{i+3}),\ldots$, then $\ol{d}=c+\cost_\alpha(v_i,\vec{b}_i)+\ol{e}$. Since $\ol{d}<d$, it follows that 
	\begin{equation}
	\label{eq: ole < e}
	\cost_\alpha(v_i,\vec{b}_i)+\ol{e}<\cost_\alpha(v_{i-1},\vec{a})+e.
	\end{equation}
	
	By Theorem~\ref{thm: compute punishing game value}, the second-strike strategies $\vec{\chi}=(\chi^\alpha_{\beta})_{\beta\neq \alpha}$ satisfy   $\cost_\alpha(\outcome_{u_{i+1}}(\tau,\vec{\chi}))\ge C_\alpha(u_{i+1})$ for every $\tau\in \Pi_\alpha$. In particular, we have that 
	\begin{equation}
	\label{eq: ole ge C_1}	
	\ol{e}\ge C_\alpha(u_{i+1}).
	\end{equation}
	
	Viewing $C_\alpha(u_{i+1})$ on the contrapositive, we get that for every strategy $\vec{\sigma}\in \Pi_{\Omega\setminus\set{\alpha}}$ there exists a strategy $\tau\in \Pi_\alpha$ such that $\cost_\alpha(\outcome(\tau,\vec{\sigma}))\le C_\alpha(u_{i+1})$. Let $\tau'$ be such a strategy for Player $\alpha$ against the profile $\tup{\sigma_\beta}_{\beta\neq \alpha}\in \Pi_{\Omega\setminus \set{\alpha}}$ from $u_{i+1}$, we augment $\tau'$ to the following strategy: play according to $\sigma_\alpha$ along $\pi$ up to $v_{i}$, and then play $\vec{b}$ as $\sigma'$ does. Then, proceed with $\tau'$ from $u_{i+1}$.
	We show that $\tau'$ is a beneficial deviation from $P$, in contradiction to the assumption that $P$ is an NE. 
	
	Let $\pi'=\outcome(P[\alpha\gets \tau'])=(v_0,\vec{a}_0,v_1),\ldots (v_{i-1},\vec{a}_{i-1},v_i),(v_i,\vec{b}_i,u_{i+1}),(u_{i+1},\vec{\lambda}_i,w_{i+2}),\ldots$
	where $\lambda_i$ are the actions prescribed by $P[\alpha\gets \tau']$ from $u_{i+1}$. As before, we let $d'=\cost_\alpha(\pi')$ and $e'=\cost_\alpha((u_{i+1},\vec{\lambda}_i),\ldots)$, then $d'=c+\cost_\alpha(v_i,\vec{b}_i)+e'$, with $e'\le C_\alpha(u_{i+1})$
	
	Combining this with Equations~\eqref{eq: ole < e} and \eqref{eq: ole ge C_1}, we now have
	\begin{align*}
	&d'=c+\cost_\alpha(v_i,\vec{b}_i)+e'\\
	&\le c+\cost_\alpha(v_i,\vec{b}_i)+C_\alpha(u_{i+1})\\
	&\le c+\cost_\alpha(v_i,\vec{b}_i)+\ol{e}\\
	&< c+ \cost_\alpha(v_{i-1},\vec{a})+e =d	
	\end{align*}
	and we are done.
\end{proof}
%\begin{proof}[Sketch]
%We show that no player can beneficially deviate from $\ol{P}$. Consider $\pi=\outcome(P)=\outcome(\ol{P})$. 
%Intuitively, if Player $\alpha$ deviates from $\pi$ for the first time at some transition $(v,\vec{a},u)$, by taking a transition $(v,\vec{b},u)$, then from state $u$ the players in $\Omega\setminus\set{\alpha}$ form a coalition and play according to $\Game_{\widehat{\alpha}}$. Thus, after deviating, Player $\alpha$ will incur additional cost of at least $C_\alpha(u)$, so the total cost incurred by Player $\alpha$ is at least $p+e'+C_\alpha(u)$ where $p$ is the cost up to state $v$ and $e'=\cost_\alpha(v,\vec{b},u)$. 
%
%If, by way of contradiction, the deviation was beneficial for Player $\alpha$, it follows that $p+e'+C_\alpha(u)<\cost_\alpha(\pi)$. 
%However, since $\Game_{\widehat{\alpha}}$ is, in essence, a zero-sum game, it follows that Player $\alpha$ can also guarantee to incur additional cost of at most $C_\alpha(u)$ after deviating to $u$. This means that there is also a beneficial deviation for Player $\alpha$ in the profile $P$, which is a contradiction, since $P$ is an NE.
%\end{proof}

Consider a profile $\ol{P}$ for some (not necessarily NE) profile $P$, and a player $\alpha\in \Omega$. 
Suppose Player $\alpha$ deviates from $\ol{P}$, and that the first deviation from the outcome is the transition $(v,\vec{b},u)$.
%, then from $u$, the cost of Player $\alpha$ is at least $C_\alpha(u)$. 
By Lemma~\ref{lem: second strike enough for NE}, it follows that a profile is an NE iff no such player can deviate and gain more than $C_\alpha(u)$, plus the cost of the deviating edge. 

%shaull - since we only use this in the theorem, I just plugged it in.
%For indices $0\le i<j$ we denote by $\pi_{[i,j]}$ to be the infix $(v_i,\vec{a}_i,v_{i+1}),\ldots,(v_{j-1},\vec{a}_{j-1},v_j)$ of $\pi$. In particular, $\pi_{[0,j]}$ is the prefix of $\pi$ up to index $j$. 

Formally, we have the following characterization.
\begin{theorem}
	\label{thm: NE deviation criteria}
	A profile $P$ with outcome $\pi=(v_0,\vec{a}_0,v_1),\ldots$ is an NE iff the following holds. 
	For every Player $\alpha\in \Omega$, for every prefix $\pi_{[0,j]}=(v_0,\vec{a}_0,v_{1}),\ldots,(v_{j-1},\vec{a}_{j-1},v_j)$ of $\pi$, and for every action $a'\in \Act_\alpha$ such that $(\vec{a}_{j})_\alpha\neq a'$, let $\vec{b}\in \Act$ be the action vector obtained from $\vec{a}_{j}$ by changing the action of Player~$\alpha$ to $a'$, and let $u=\delta(v_j,\vec{b})$, then $\cost_\alpha(\pi_{[0,j]})+\cost_\alpha(v_j,\vec{b})+C_\alpha(u)\ge \cost_\alpha(\pi)$.
\end{theorem}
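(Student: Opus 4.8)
The plan is to prove the two directions separately, exploiting the fact that the max--min value $C_\alpha(u)$ of Theorem~\ref{thm: compute punishing game value} simultaneously bounds what Player $\alpha$ can \emph{secure} against any coalition strategy from above, and what the \emph{punishing} coalition forces from below, and to lean on the second--strike construction of Lemma~\ref{lem: second strike enough for NE} for the sufficiency direction.

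\textbf{Necessity.} First I would show that if $P$ is an NE then the stated inequality holds for every $\alpha$, every prefix $\pi_{[0,j]}$, and every action $a'\neq(\vec a_j)_\alpha$. I argue by contraposition: suppose $\cost_\alpha(\pi_{[0,j]})+\cost_\alpha(v_j,\vec b)+C_\alpha(u)<\cost_\alpha(\pi)$ for some such $\alpha,j,a'$, with $u=\delta(v_j,\vec b)$. I build an explicit beneficial deviation $\sigma'$ for Player $\alpha$: follow $\sigma_\alpha$ along $\pi$ up to $v_j$, then play $a'$ so that the transition $(v_j,\vec b,u)$ is taken, and from $u$ play a best response against whatever the coalition does next. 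The point is that after Player $\alpha$ deviates, the players $\beta\neq\alpha$ keep following their components of $P$, and these responses to the actual history define a single coalition strategy $\sigma_{\coal}\in\Pi_{\coal}$ in $\Game_{\widehat\alpha}$ from $u$. Since $C_\alpha(u)=\max_{\sigma}\min_{\tau}\cost_\alpha(\outcome_u(\sigma,\tau))$ is a maximum over coalition strategies, for this particular $\sigma_{\coal}$ Player $\alpha$ has a response $\tau'$ with $\cost_\alpha(\outcome_u(\sigma_{\coal},\tau'))\le C_\alpha(u)$. Concatenating, the deviation costs at most $\cost_\alpha(\pi_{[0,j]})+\cost_\alpha(v_j,\vec b)+C_\alpha(u)<\cost_\alpha(\pi)=\cost_\alpha(P)$, contradicting that $P$ is an NE. The cost bookkeeping (prefix cost, deviating edge, tail, and the convention that $\cost_\alpha$ stops accumulating once $F_\alpha$ is reached) is exactly the one in the proof of Lemma~\ref{lem: second strike enough for NE}, so I would reuse it verbatim.

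\textbf{Sufficiency.} For the converse I would pass to the second--strike profile $\ol P$, which has the same outcome $\pi$ and hence satisfies the same inequalities; I claim that when they hold, $\ol P$ is an NE. Suppose some player $\alpha$ deviates from $\ol P$ and first leaves $\pi$ at step $j$ via the transition $(v_j,\vec b,u)$. By the construction of $\ol P$, from $u$ onward the remaining players play the coalition's punishing strategy $\vec{\chi}=(\chi^\alpha_\beta)_{\beta\neq\alpha}$, which by Theorem~\ref{thm: compute punishing game value} guarantees $\cost_\alpha(\outcome_u(\tau,\vec{\chi}))\ge C_\alpha(u)$ against every response $\tau$ of Player $\alpha$. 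Thus any such deviation costs at least $\cost_\alpha(\pi_{[0,j]})+\cost_\alpha(v_j,\vec b)+C_\alpha(u)$, which by the assumed inequality is at least $\cost_\alpha(\pi)=\cost_\alpha(\ol P)$. Every deviation first leaves $\pi$ at some step $j$ with some action $a'\neq(\vec a_j)_\alpha$ (or never leaves $\pi$, in which case the cost is unchanged), so no deviation is beneficial and $\ol P$ is an NE with outcome $\pi$; together with the necessity direction this shows the inequality exactly characterizes the NE outcomes.

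\textbf{Main obstacle.} The crux is the dual use of $C_\alpha(u)$, and the two directions pull on opposite ends of the same quantity. In the necessity direction I need the \emph{upper} bound ``against any coalition strategy, in particular the one $P$ induces off $\pi$, Player $\alpha$ secures cost at most $C_\alpha(u)$,'' while in the sufficiency direction I need the \emph{lower} bound ``against the punishing strategy, Player $\alpha$ cannot beat $C_\alpha(u)$.'' Keeping straight which of these the $\max_\sigma\min_\tau$ supplies is the delicate point. Relatedly, the sufficiency direction must be argued for $\ol P$ rather than an arbitrary profile with outcome $\pi$: the inequality constrains only the outcome, and it is precisely the punishing off--path behavior baked into $\ol P$ that realizes the bound $C_\alpha(u)$; without it a profile could have outcome $\pi$ satisfying the criterion yet still admit a cheap escape. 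The remaining cost algebra is routine and already appears in Lemma~\ref{lem: second strike enough for NE}.
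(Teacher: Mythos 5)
Your proof is correct and follows essentially the same route the paper takes: the paper states this theorem without a separate proof, deriving it from Lemma~\ref{lem: second strike enough for NE} together with the two-sided reading of $C_\alpha(u)$ (an upper bound on what Player $\alpha$ can secure against the coalition strategy induced by $P$, a lower bound on what the punishing strategy forces), which is exactly the structure of your two directions. Your remark that sufficiency must be witnessed by the second--strike profile $\ol{P}$ rather than by an arbitrary profile with outcome $\pi$ correctly captures the intended reading of the statement, since the paper defines an outcome to be in NE when \emph{some} NE profile realizes it.
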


Theorem~\ref{thm: NE deviation criteria}, combined with Theorem~\ref{thm: compute punishing game value} almost give us an algorithmic procedure for deciding whether a profile is an NE. Missing is a bound on the length of the outcome (until $F_\alpha$ is reached for all relevant players).  We now proceed to obtain such a bound, by bounding the memory required from Pareto-optimal NE strategies.

Let \(\Game = \tup{\Omega, V,  \Act, v_0, \delta,\cost,F}\), we obtain from $\Game$ the \emph{$F$-expanded game} $\Game^\star=\langle\Omega, V\times 2^{\Omega},  \Act, (v_0, \emptyset),$ $ \delta^\star,\cost^\star,F^\star\rangle$ as follows. We construct a copy of $\Game$ for each subset of the players. Intuitively, the subset denotes which players have already visited their target sets. Thus, the states are $V\times 2^{\Omega}$, and the initial state is $(v_0, \emptyset)$. The actions are the same as those of $\Game$. The transition function is defined as follows: for a state $(v,S)$ and action vector $\vec{a}$, we have that $\delta^\star((v,S),\vec{a})=(v',S')$ where $v'=\delta(v,\vec{a})$ and $S'=S\cup \set{\alpha\in \Omega: v\in F_\alpha}$. That is, all players who reached their target by state $v$ are added to $S$. The cost function is defined as follows: for Player $\alpha\in \Omega$ we have that 
\[\cost^\star_\alpha((v,S),\vec{a})= \begin{cases}
\cost_\alpha(v,\vec{a}) & \alpha\notin S\\
0 & \alpha\in S
\end{cases}\]
That is, once a player has reached $F_\alpha$, which is encoded in $S$, no further cost is incurred. Finally, we set $F^\star_\alpha=\set{(v,S): \alpha\in S}$. 

Clearly there is a bijection between strategies of $\Game$ and $\Game^\star$, and this induces to a bijection between outcomes, and between profiles. 

Consider a profile $P^\star$ in $\Game^\star$. We define the set of \emph{winners} $W\subseteq \Omega$ to consist of all players $\alpha$ such that $F_\alpha$ is visited along $\outcome(P^\star)$. By the construction of $\Game^\star$, $\outcome(P^\star)$ eventually reaches the copy $V\times W$, and stays there (with all players in $W$ accumulating cost $0$, and all other players incurring cost $\infty$ by definition). 

For a set $W$ of winners, we say that a transition $(x,\vec{a},y)$ (either in $\Game$ or $\Game^\star$) is \emph{safe for $W$} if for every player $\alpha\in \Omega\setminus W$ and every action $\vec{b}$ that is obtained from $\vec{a}$ by (possibly) changing the action of Player $\alpha$, the resulting transition $(x,\vec{b},z)$ satisfies $C_\alpha(z)=\infty$. 

The following is an easy observation.
\begin{lemma}
	\label{lem: safe transitions}
	\begin{enumerate}
		\item In an NE profile $P$ with winners $W$, the outcome can only take safe transitions for $W$.
		\item Consider a state $(u,W)\in \Game^\star$ as an initial state, then for a profile $P$ whose outcome remains in $V\times W$ and takes only safe transitions for $W$, the second-strike profile $\ol{P}$ is an NE.
	 \end{enumerate}
\end{lemma}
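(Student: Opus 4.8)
The plan is to prove the two parts separately, in both cases leveraging the deviation criterion of Theorem~\ref{thm: NE deviation criteria} together with the meaning of $C_\alpha(z)=\infty$, namely that the coalition can prevent Player $\alpha$ from ever reaching $F_\alpha$.

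For part (1), I would argue by contraposition. Suppose the NE profile $P$ (with winners $W$) takes a transition $(x,\vec a,y)$ along its outcome $\pi$ that is \emph{not} safe for $W$. Then by definition there is a player $\alpha\in\Omega\setminus W$ and a deviating action vector $\vec b$ (differing from $\vec a$ only in $\alpha$'s action) leading to a state $z=\delta(x,\vec b)$ with $C_\alpha(z)<\infty$. I would show this contradicts the NE deviation criterion. Since $\alpha\notin W$, Player $\alpha$ never reaches $F_\alpha$ along $\pi$, so $\cost_\alpha(\pi)=\infty$. Now consider the prefix $\pi_{[0,j]}$ ending at the state $x$ (with $x=v_j$), and let $\alpha$ deviate to $a'$ yielding $\vec b$ and successor $z=u$. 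The quantity $\cost_\alpha(\pi_{[0,j]})+\cost_\alpha(v_j,\vec b)+C_\alpha(u)$ is then a \emph{finite} number (a finite prefix cost, plus a finite edge cost, plus the finite value $C_\alpha(z)$), which is strictly less than $\cost_\alpha(\pi)=\infty$. By Theorem~\ref{thm: NE deviation criteria} this violates the NE condition, so $P$ cannot be an NE — contradiction. Hence every transition along the outcome of an NE is safe for $W$.

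For part (2), I would start from a profile $P$ (in the game $\Game^u$ with initial state corresponding to $(u,W)$) whose outcome stays in $V\times W$ and uses only safe transitions for $W$, and verify that the second-strike profile $\ol P$ satisfies the NE deviation criterion of Theorem~\ref{thm: NE deviation criteria}, which by Lemma~\ref{lem: second strike enough for NE} (more precisely, by the characterization in that theorem) certifies $\ol P$ as an NE. I would split the players into two groups. For a player $\alpha\in W$, the outcome already reaches $F_\alpha$; since $\alpha$ is a winner, $\cost_\alpha(\outcome(\ol P))$ is finite, and here one should observe that staying in $V\times W$ with the $F$-expanded cost convention means no beneficial deviation improves on the agreed finite cost — though to be careful I would note the key case is players \emph{outside} $W$. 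For a player $\alpha\in\Omega\setminus W$, the agreed outcome never reaches $F_\alpha$, so $\cost_\alpha(\outcome(\ol P))=\infty$, and I must show $\alpha$ cannot beneficially deviate to any finite cost. Take any deviation of $\alpha$: the first deviating transition $(v_j,\vec b,z)$ is obtained from a safe transition $(v_j,\vec a,\cdot)$ by changing only $\alpha$'s action, so by safety $C_\alpha(z)=\infty$. After this deviation, the coalition plays its second-strike strategy from $z$, which by Theorem~\ref{thm: compute punishing game value} guarantees Player $\alpha$ a cost of at least $C_\alpha(z)=\infty$; thus $\alpha$ still fails to reach $F_\alpha$ and gains nothing. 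Since every possible first deviation lands on a transition that differs from a safe one only in $\alpha$'s action, no deviation is beneficial.

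The main obstacle I anticipate is the treatment of winners $\alpha\in W$ in part (2): unlike the losers, their cost is finite, so safety alone does not immediately forbid a beneficial deviation, and I need to make sure the statement is invoked in the $F$-expanded game $\Game^\star$ (or equivalently $\Game^u$ with initial information $W$) where the definition of ``safe for $W$'' quantifies only over $\alpha\in\Omega\setminus W$. I expect the resolution is that part (2) is really asserting $\ol P$ is an NE \emph{with respect to the losers' ability to deviate}, with winners' incentives handled by the Pareto-optimality/structure of $P$ or by the fact that the relevant use of this lemma (in the NE-search algorithm of Section~\ref{sec: computing NE}) already fixes a desired finite cost profile for the winners; I would state precisely which deviations are being ruled out and confirm that the safe-transition condition, combined with the second-strike construction and Theorem~\ref{thm: compute punishing game value}, suffices for exactly those. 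The subtlety is thus bookkeeping about which players' deviations the lemma must rule out, rather than a deep argument.
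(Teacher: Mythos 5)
The paper offers no proof of this lemma (it is labelled an easy observation), and your overall route---contraposition via Theorem~\ref{thm: NE deviation criteria} for part (1), and splitting winners from losers with the second-strike punishment for part (2)---is the intended one. Two loose ends remain, one in each part.

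In part (1), note that ``safe for $W$'' quantifies over every $\vec{b}$ obtained by \emph{possibly} changing Player $\alpha$'s action, so $\vec{b}=\vec{a}$ is included: unsafety may consist solely of $C_\alpha(y)<\infty$ for the \emph{on-path} successor $y$. Your argument routes everything through Theorem~\ref{thm: NE deviation criteria}, which only quantifies over $a'\neq(\vec{a}_j)_\alpha$, so it does not cover this case. It is easily handled directly from the definition of $C_\alpha$ as a $\max\min$: since $C_\alpha(y)<\infty$, against the particular coalition behaviour induced by $P$ from $y$ Player $\alpha$ has a response reaching $F_\alpha$ with cost at most $C_\alpha(y)$; prepending the (finite-cost) prefix gives a finite-cost deviation beating $\cost_\alpha(\pi)=\infty$, contradicting the NE assumption. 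Add this case explicitly.

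In part (2), your treatment of the losers is correct and is the heart of the matter, but your hedge about the winners should be replaced by the actual (trivial) resolution: the lemma is stated in $\Game^\star$ with initial state $(u,W)$, and by the definition of $\cost^\star$ every transition inside $V\times W$ costs $0$ to each $\alpha\in W$, so each winner already attains cost $0$, the minimum possible, and no deviation can be beneficial. There is no need to appeal to Pareto-optimality of $P$ or to restrict which deviations the lemma rules out; with that observation and your loser argument, $\ol{P}$ satisfies the full NE condition.
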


We are now ready to characterize $\NEset$ by means of the expanded game. 
\begin{lemma}
	\label{lem: simple outcome enough}
	Let $\vec{c}\in \NEset(v_0)$, then there exist an NE profile $P^\star$ in $\Game^\star$ with $\cost(P^\star)=\vec{c}$ and a set of winners $W$ such that $\outcome(P^\star)$ forms a lasso, namely a a simple path followed by a simple cycle, in $\Game^\star$.
\end{lemma}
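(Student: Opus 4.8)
The plan is to start from an arbitrary NE profile realizing $\vec c$ and successively reshape its outcome in $\Game^\star$ into a lasso without changing the cost, using Pareto-optimality to pin the cost down. First I would take any NE profile with cost $\vec c$, replace it by its second--strike version via Lemma~\ref{lem: second strike enough for NE}, and transport it to $\Game^\star$ through the canonical bijection. Let $W$ be its set of winners. Since every winner reaches its target in finitely many steps, $\outcome(P^\star)$ reaches the copy $V\times W$ after a finite prefix $\pi_{\mathrm{pre}}$ ending at the first state $(u,W)\in V\times W$, and thereafter stays in $V\times W$, where by definition of $\cost^\star$ winners accumulate $0$ and losers accumulate $\infty$. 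Hence the whole vector $\vec c$ is already determined by $\pi_{\mathrm{pre}}$.

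The core of the argument is a loop-cutting step on $\pi_{\mathrm{pre}}$. Suppose a state repeats, $x_i=x_j$ with $i<j$; as the $2^{\Omega}$-component is non-decreasing along any outcome, it is constant on the loop, so no winner completes inside it. I would splice $x_i$ directly onto $x_{j+1}$ (legal since $x_i=x_j$), obtaining a shorter outcome $\pi'$, and claim that $\pi'$ with its second--strike punishments is again an NE with $\cost_\alpha(\pi')\le \cost_\alpha(\pi)$ for all $\alpha$. The verification uses the deviation criterion of Theorem~\ref{thm: NE deviation criteria}: writing $\ell_\alpha\ge 0$ for the cost incurred on the removed loop, every winner's total cost drops by exactly $\ell_\alpha$, while at a deviation position the prefix cost drops by $\ell_\alpha$ (positions after the cut) or stays equal with the target value only decreasing (positions before the cut); in both cases the original inequality implies the new one. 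For losers the criterion degenerates to requiring $C_\alpha(z)=\infty$ at every deviation, i.e. to safety of the traversed transitions in the sense of Lemma~\ref{lem: safe transitions}, which is preserved because $\pi'$ uses a subset of the transitions of $\pi$. Iterating the cut finitely often makes $\pi_{\mathrm{pre}}$ a simple path ending at $(u,W)$ and yields an NE of cost $\vec d\le \vec c$; since $\vec c\in\NEset(v_0)$ is Pareto-optimal, $\vec d=\vec c$. This is precisely where Pareto-optimality is indispensable: it forces the reshaped outcome to keep the \emph{same} cost rather than a smaller one.

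It remains to turn the infinite suffix inside $V\times W$ into a simple cycle. By Lemma~\ref{lem: safe transitions}(1) the suffix uses only safe transitions for $W$, and by Lemma~\ref{lem: safe transitions}(2) any infinite outcome staying in $V\times W$ with only safe transitions induces an NE from $(u,W)$; moreover such a suffix contributes $0$ to every winner and leaves every loser at $\infty$, so replacing it never disturbs $\vec c$ nor the (unchanged) prefix deviation conditions. I would therefore regard the simple prefix followed by the safe suffix as one infinite path in the finite graph $\Game^\star$ and extract a lasso by cutting at the first repeated state: because the prefix is simple and each prefix state other than $(u,W)$ has $2^{\Omega}$-component $\neq W$ whereas every suffix state has component $W$, the first repetition necessarily occurs inside $V\times W$, so the tail up to it is a simple path and the repetition closes a simple cycle. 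The main obstacle is the loop-cutting step, where one must check that deleting a loop---possibly one of positive cost---preserves the NE deviation inequalities at \emph{every} position simultaneously, and that the final appeal to Pareto-optimality is what fixes the cost at $\vec c$ instead of merely bounding it from above.
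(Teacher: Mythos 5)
Your proof is correct and follows essentially the same route as the paper's: pass to a second--strike profile, move to $\Game^\star$, excise cycles from the finite prefix before $V\times W$ (checking the deviation criterion of Theorem~\ref{thm: NE deviation criteria} survives the cut), replace the infinite suffix by a simple safe cycle via Lemma~\ref{lem: safe transitions}, and invoke Pareto-optimality to pin the cost at $\vec{c}$. Your verification of the loop-cutting step is in fact more explicit than the paper's one-line justification, but the argument is the same.
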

\begin{proof}
		Let $\vec{c}\in \NEset(v_0)$, then there exists an NE profile $P$ which attains it, and moreover, by Lemma~\ref{lem: second strike enough for NE} we can assume $P$ to be a second-strike profile. Let $P^\star$ be the corresponding profile for $\Game^\star$. Let $\pi=\outcome(P^\star)$, then we can write $\pi=\mu\cdot \eta$ where $\mu$ is a maximal finite prefix of $\pi$ from $(v_0,\emptyset)$ that does not visit $V\times W$,  and $\eta$ is the infinite suffix within $V\times W$. We now modify $\pi$ to obtain a new outcome, with the desired properties, that induces an NE as per Theorem~\ref{thm: NE deviation criteria}.
		
		By Lemma~\ref{lem: safe transitions}, all transitions in $\pi$ are safe for $W$. Thus, as long as we only use transitions that are taken in $\pi$, the players in $\Omega\setminus W$ cannot gain by deviating. We henceforth focus only on the players in $W$.
		
		We consider the suffix $\eta$. Since $\eta$ is infinite and $V\times W$ is finite, then there exist cycles in $\eta$. Write $\eta=\eta_1\cdot \eta_2\cdot \eta_3$ where $\eta_1$ is a simple path, $\eta_2$ is the first simple cycle in $\eta$, and $\eta_3$ is the remaining suffix. 
		We claim that replacing $\eta$ by $\eta_1\cdot \eta_2^\omega$ induces an NE profile that attains cost $\vec{c}$. Indeed, observe that for the players in $W$, the cost does not change, and they cannot deviate by gaining, as they do not accumulate cost once $V\times W$ is reached (and since $\mu$ remains unchanged, there is no incentive to deviate).  
		
		Next, consider the prefix $\mu$. If $\mu$ contains a cycle, we claim that it can be removed: indeed, write $\mu=\mu_1\cdot \mu_2\cdot \mu_3$, where $\mu_2$ is a cycle, then the players can modify their strategies such that $\mu_1\cdot \mu_3$ is the outcome (or prefix thereof). Clearly $\cost(\mu_1\cdot \mu_3)\le \cost(\mu)$. It remains to show that this is still an NE profile. However, observe that any beneficial deviation from $\mu_1\cdot \mu_3$ induces a beneficial deviation from $\mu$, but since $\mu$ is part of an NE, this cannot exist.
		
		Since $\vec{c}$ is Pareto-optimal, there cannot be an NE profile that attains a lower cost, so we conclude that the above truncation yields cost exactly $\vec{c}$.
		
		We conclude that $\pi$ can be assumed a lasso in $\Game^\star$.
\end{proof}

By the structure of $\Game^\star$, if a state $(v',S')$ is reachable from the state $(v,S)$ in $\Game^\star$, then $S\subseteq S'$. Thus, a maximal simple path in $\Game^\star$ is of length $\Omega\times V$. It follows that the maximal length of a simple path in $\Game^\star$ is $|\Omega|\cdot |V|$. Furthermore, note that a simple cycle within $V\times W$ for some set $W$ corresponds to a simple cycle in $V$. From Lemma~\ref{lem: simple outcome enough}, we can conclude the following.

\begin{corollary}
	\label{cor: k cycle outcome enough}
	Let $\vec{c}\in \NEset(v_0)$, then it is attained by an NE profile $P$ with in $\Game$ with a set of winners $W, $such that $\outcome(P)=\mu\cdot \eta^\omega$, where $\mu$ is a path of length at most $|\Omega|\cdot |V|$ that visits $F_\alpha$ for all $\alpha\in W$, and $\eta$ is a simple cycle.
\end{corollary}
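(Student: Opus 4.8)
The plan is to derive the corollary directly from Lemma~\ref{lem: simple outcome enough} by transporting its conclusion from the expanded game $\Game^\star$ back to $\Game$ through the bijection between the two games, while reading off the two length bounds from the layered structure of $\Game^\star$. First I would apply Lemma~\ref{lem: simple outcome enough} to the given $\vec{c}\in \NEset(v_0)$, obtaining an NE profile $P^\star$ in $\Game^\star$ with $\cost(P^\star)=\vec{c}$, a set of winners $W$, and a lasso outcome $\outcome(P^\star)=\mu^\star\cdot(\eta^\star)^\omega$, where $\mu^\star$ is a simple path and $\eta^\star$ a simple cycle.

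Next I would bound the length of $\mu^\star$ using the monotonicity of the second coordinate. By the definition of $\delta^\star$, a transition can only add players to $S$ and never remove them, so the distinct values of $S$ occurring along the simple path $\mu^\star$ form a strictly increasing chain in $2^\Omega$, of length at most $|\Omega|+1$; and within each layer $V\times\{S\}$ a simple path visits each of the at most $|V|$ first-coordinate states at most once. This yields the stated bound $|\mu^\star|\le |\Omega|\cdot|V|$. The same monotonicity forces $S$ to stay constant along the cycle $\eta^\star$, so $\eta^\star$ lies entirely within a single layer $V\times W$; projecting onto the first coordinate then turns it into a genuine simple cycle $\eta$ of $\Game$.

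I would then pull $P^\star$ back to a profile $P$ in $\Game$ along the bijection between strategies, outcomes, and profiles of $\Game$ and $\Game^\star$. Projecting the lasso onto the first coordinate gives $\outcome(P)=\mu\cdot\eta^\omega$, with $\mu$ the image of $\mu^\star$ and $\eta$ the simple cycle obtained above. To see that $\mu$ visits $F_\alpha$ for every $\alpha\in W$, I would observe that the cycle lives in the layer $V\times W$, and reaching that layer requires each player of $W$ to have been inserted into $S$, which by the construction of $\delta^\star$ happens exactly when that player's target is first visited; all such visits therefore occur along the prefix $\mu$.

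The only genuinely delicate point is the transfer of the NE property (and of the cost) across the bijection. Here I would rely on the fact that $\cost^\star$ was defined so that a player stops accruing cost precisely once its target has been visited, which is exactly the reachability semantics of $\cost$ in $\Game$; this gives $\cost(P)=\cost(P^\star)=\vec{c}$, and, together with the fact that the punishing values $C_\alpha$ agree on the projected states, shows that the deviation criterion of Theorem~\ref{thm: NE deviation criteria} read in $\Game^\star$ is equivalent to the same criterion in $\Game$, so that $P$ is an NE. Everything else is routine bookkeeping on the layered structure of $\Game^\star$.
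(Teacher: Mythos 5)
Your argument is essentially identical to the paper's: the paper derives this corollary from Lemma~\ref{lem: simple outcome enough} by exactly the observation you make, namely that reachability in $\Game^\star$ forces $S\subseteq S'$, so simple paths have length at most $|\Omega|\cdot|V|$ and any simple cycle stays within one layer $V\times W$ and projects to a simple cycle of $\Game$. Your added care about transferring the cost and the NE property through the bijection is correct and only makes explicit what the paper leaves implicit.
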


\section{Computing NE}
\label{sec: computing NE}
Combining Theorems~\ref{thm: compute punishing game value} and~\ref{thm: NE deviation criteria}, and Corollary \ref{cor: k cycle outcome enough}, gives us a simple algorithm for deciding whether $\vec{c}\in \NEset$. Given the game $\Game$, we look for a path of the form $\mu\cdot \eta$ as per Corollary~\ref{cor: k cycle outcome enough}, and check that the condition described in Theorem~\ref{thm: NE deviation criteria} holds for this path.

Note that checking the latter can be done in polynomial time, since we only need to check that (1) for deviations of all the players along the prefix $\mu$, and that (2) once we reach the cycle $\eta$, for the set of winners $W$, all the edges along $\eta$ are safe for $W$.

We thus have the following theorem.
\begin{theorem}
	\label{thm: deciding NE general}
	The problem of deciding, given a game $\Game$ and a cost vector $\vec{c}$, whether $\vec{c}\in \NEset$ is in \NP.
\end{theorem}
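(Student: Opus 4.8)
The plan is to give a standard guess-and-verify argument: the nondeterministic witness is a single polynomial-size lasso outcome, and the verification reuses the polynomial-time ingredients established earlier. Concretely, by Corollary~\ref{cor: k cycle outcome enough}, whenever $\vec{c}\in\NEset(v_0)$ it is attained by an NE profile $P$ whose outcome has the form $\mu\cdot\eta^\omega$, where $\mu$ is a path of length at most $|\Omega|\cdot|V|$ that visits $F_\alpha$ for every winner $\alpha\in W$, and $\eta$ is a simple cycle of length at most $|V|$. Hence I would take the witness to be precisely the finite data $(\mu,\eta)$, together with the induced set of winners $W$. This is polynomial in the size of $\Game$ (the state and action sets are given explicitly, and the accumulated costs, even in binary, stay polynomially bounded along a path of polynomial length), so the certificate has the required size.

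Given such a witness, the verifier runs in polynomial time as follows. First, check that $\mu\cdot\eta$ is a genuine outcome of $\Game$, i.e.\ that consecutive transitions agree and that each transition is consistent with $\delta$, evaluating the circuit for $\delta$ on the relevant action vectors. Second, determine the winners along $\mu\cdot\eta^\omega$ and compute $\cost_\alpha$ for every player by summing the per-transition costs up to the first visit of $F_\alpha$; verify that the resulting cost vector equals $\vec{c}$ (and that exactly the players with finite $c_\alpha$ lie in $W$). Third, precompute $C_\alpha(u)$ for every player $\alpha$ and every state $u$, which is legitimate because Theorem~\ref{thm: compute punishing game value} yields these values in polynomial time. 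Finally, verify that $P$ is an NE using the criterion of Theorem~\ref{thm: NE deviation criteria}: for each of the polynomially many prefixes $\pi_{[0,j]}$ of $\mu$, each player $\alpha$, and each of the $|\Act_\alpha|$ alternative first actions $a'$, form the action vector $\vec{b}$, compute $u=\delta(v_j,\vec{b})$ by one circuit evaluation, and check $\cost_\alpha(\pi_{[0,j]})+\cost_\alpha(v_j,\vec{b})+C_\alpha(u)\ge\cost_\alpha(\pi)$. Since a single-player deviation changes only one coordinate of the action vector, this is a polynomial number of circuit evaluations and arithmetic comparisons.

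The one real obstacle is that Theorem~\ref{thm: NE deviation criteria} quantifies over all prefixes of an \emph{infinite} outcome, whereas a verifier can inspect only finitely much data. I would resolve this exactly as in the characterization: the prefix part $\mu$ contributes only finitely many prefixes, and on the cyclic tail $\eta^\omega$ the deviation condition becomes uniform, so it suffices — by Lemma~\ref{lem: safe transitions} — to check that every edge of $\eta$ is safe for $W$, i.e.\ that any single-player deviation of a loser $\alpha\in\Omega\setminus W$ leads to a state $z$ with $C_\alpha(z)=\infty$ (the winners accumulate no further cost on the tail and thus have no incentive to deviate there). This collapses the infinitely many deviation conditions to a polynomial check over the finitely many cycle edges, completing the polynomial-time verification and placing the problem in \NP. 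The only point requiring care is to confirm that the guessed lasso realizes the cost vector \emph{exactly}, which is precisely what Corollary~\ref{cor: k cycle outcome enough} guarantees for Pareto-optimal vectors; if one instead reads the claim as membership among all NE cost vectors, the same witness and verifier apply verbatim.
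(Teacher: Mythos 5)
Your proposal is correct and follows essentially the same route as the paper: guess the lasso witness $\mu\cdot\eta$ guaranteed by Corollary~\ref{cor: k cycle outcome enough}, compute the punishment values $C_\alpha(u)$ via Theorem~\ref{thm: compute punishing game value}, check the deviation criterion of Theorem~\ref{thm: NE deviation criteria} along the prefix, and reduce the infinitely many conditions on the tail to safety of the cycle edges via Lemma~\ref{lem: safe transitions}. Your explicit handling of the infinite-suffix issue and of the exact-cost check is a welcome elaboration of details the paper leaves implicit, but it is not a different argument.
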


\subsection{Computing $\NEset$}
\label{subsec: algorithm}
A broader problem relating to NE is that of computing the entire set $\NEset(v_0)$ of Pareto-optimal NE. As it turns out, solving this problem provides insight to the effect of different parameters of the game on the complexity of the NE-with-threshold problem.

We now describe an algorithm to compute $\NEset(u)$ for every state $u$ in a game $\Game$.
Since our characterization of NE in Section~\ref{sec: characterization of NE} utilizes the $F$ - expanded graph $\Game^\star$, it will be easier to work with $\Game^\star$. Broadly, the algorithms computes for every state $u$, the set of cost vectors $\vec{c}$ from which there exists a lasso witness, as per Lemma~\ref{lem: simple outcome enough}.

The algorithm proceeds by iterating over all subsets $W\subseteq \Omega$, and computing for every state $u$ of $\Game^\star$,  the set $\NEset^W(u)$ of cost vectors $\vec{c}$ for which there exists an NE $P^\star$ with set of winners $W$. 

By Lemma~\ref{lem: safe transitions}, only safe transitions for $W$ are relevant when the set of winners is $W$. Thus, we start by computing the set of safe transitions, and removing from $\Game^\star$ all other transitions. We refer to the obtained game graph as $\Game^\star|_W$ Clearly this can be done in polynomial time in the size of $\Game^\star$. Note that in particular, every state $u$ in $\Game^\star|_W$ satisfies $C_\alpha(u)=\infty$. This means that the actions of the players in $\Omega\setminus W$ effectively do not matter, as their cost will inevitably remain $\infty$.

We assume w.l.o.g. that all states in $\Game^\star|_W$ are reachable from $(v_0,\emptyset)$, otherwise we can remove the non-reachable ones. Moreover, we assume the underlying graph has strongly connected components reachable from $(v_0,\emptyset)$, otherwise there cannot be an outcome, and we are done.

Note that once $V\times W$ is reached in $\Game^\star|_W$, there exists a cycle in $V\times W$ that satisfies condition (2) of Lemma~\ref{lem: safe transitions}, and hence induces an NE.

The algorithm stores, for every state $u\in \Game^\star|_W$, a set $D(u)$ of pairs $(i,\vec{c})$ such that $(i,\vec{c})\in D(u)$ iff there exists a path in $\Game^\star|_W$ of length at most $i$ from $u$ to $V\times W$ with cost $\vec{c}$ that satisfies the conditions of Lemma~\ref{lem: simple outcome enough}. The algorithm then iterates over the length $i$, as follows.

\paragraph*{Initilization:}  For $i=0$, we add $(0,\vec{0}_W)\in D(u)$ for every $u\in V\times W$, where $\vec{0}_W$ has $0$ in the coordinates corresponding to $W$, and $\infty$ everywhere else.
\paragraph*{Update:} At iteration $i$, we add $(i,\vec{c})$ to $D(u)$ iff the following holds. 
\begin{enumerate}
	\item There exists a transition $(u,\vec{a},v)$ in $\Game^\star|_W$ and $(j,\vec{c}')\in D(v)$ such that $j<i$ and $\vec{c}=\cost(u,\vec{a})+\vec{c}'$.
	\item For every state $w$ and every action $\vec{b}$ that differs from $\vec{a}$ only in coordinate $\alpha\in \Omega$, let $v'=\delta(u,\vec{b})$, then for every $(j,\vec{d})\in D(v)$, if $j<i$ then $(\vec{c})_\alpha\le \cost_\alpha(u,\vec{b})+\vec{d}$.
\end{enumerate}

The algorithm terminates at iteration $|V|\times |\Omega|$, and returns the cost vectors in $D(v_0)$ (or a Pareto optimal subset thereof).

Clearly, in general the complexity of the algorithm is exponential in $|\Omega|$, and is thus generally exponential. We remark later on the effect of specific parameters on the complexity.

The correctness of the algorithm is easy to prove using the results of Section~\ref{sec: characterization of NE}. By Theorem~\ref{thm: NE deviation criteria}, all the cost vectors we compute are indeed NE cost vectors. By Lemma~\ref{lem: simple outcome enough}, it is enough to check for paths up to length $|\Omega|\cdot |V|$.

\subsection{Polynomial-Time Fragments}
\label{subsec: polynomial algorithm}
As mentioned above, in general the algorithm we describe in Section~\ref{subsec: algorithm} takes exponential time. In fact, even when $|\Omega|$ is fixed, the algorithm can still take exponential time, depending on the size of $\NEset(v_0)$. We now demonstrate two cases where $\NEset$ can be computed in polynomial time. In Section~\ref{sec:hardness}, we show that the restrictions in these fragments are tight, in the sense that removing any restriction makes the problem \NPh.

\begin{theorem}
	\label{thm: fixed players unary}
	When $|\Omega|$ is fixed and $\cost$ is described in unary, computing $\NEset(v_0)$ can be done in polynomial time.
\end{theorem}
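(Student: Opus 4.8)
The plan is to run the algorithm of Section~\ref{subsec: algorithm} and verify that, under the two hypotheses, every object it manipulates has polynomial size. First I would note that with $|\Omega|$ fixed the expanded game $\Game^\star$ has $|V|\cdot 2^{|\Omega|}$ states, which is polynomial in $|V|$, and the outer loop over winner sets $W\subseteq \Omega$ runs only $2^{|\Omega|}$ times, i.e.\ a constant number of times. Moreover, the number of action vectors available at a state is $\prod_{\alpha\in\Omega}|\Act_\alpha|$, a product of a constant number of polynomially-bounded factors, hence polynomial; so even though the transition function is given by a circuit, the Update step can simply query it on each of the polynomially many action vectors, and can likewise enumerate the single-player deviations $\vec b$ in condition~(2) efficiently.

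The crux is bounding the size of the sets $D(u)$. By Corollary~\ref{cor: k cycle outcome enough}, every vector in $\NEset(v_0)$ is attained by a lasso $\mu\cdot\eta^\omega$ whose prefix $\mu$ has length at most $|\Omega|\cdot|V|$, and since the winners accumulate no cost on $\eta$ (in $\Game^\star$ their cost is $0$ once their target has been recorded in $S$), all finite coordinates of the cost vector are contributed by $\mu$ alone. Because $\cost$ is written in unary, the largest single-transition cost $M$ is at most the input size, so each finite coordinate is bounded by $N=|\Omega|\cdot|V|\cdot M$, a polynomial; the coordinates outside $W$ are identically $\infty$. Consequently the number of distinct cost vectors that can occur in any $D(u)$ is at most $(N+1)^{|\Omega|}$, which is polynomial precisely because $|\Omega|$ is fixed. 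Since the length index $i$ ranges over $\{0,\ldots,|\Omega|\cdot|V|\}$, each $D(u)$ holds at most $|\Omega|\cdot|V|\cdot(N+1)^{|\Omega|}$ pairs.

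Combining these bounds, each of the polynomially many iterations updates polynomially many states, and each Update consults polynomially many transitions, deviations, and entries of the neighbouring sets $D(v)$; hence the entire computation, summed over the constantly many choices of $W$, runs in polynomial time, and the returned set $D(v_0)$ (or its Pareto-optimal subset) is of polynomial size. The main obstacle is exactly the size estimate on $D(u)$: it is the one place where both hypotheses are indispensable. Relaxing the fixed-players assumption turns $(N+1)^{|\Omega|}$ into an exponential quantity, and relaxing the unary encoding makes $M$ (hence $N$) exponential in the input size, so dropping either restriction breaks the polynomial bound — in agreement with the matching hardness results of Section~\ref{sec:hardness}.
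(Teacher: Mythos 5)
Your proposal is correct and follows essentially the same route as the paper: run the algorithm of Section~\ref{subsec: algorithm} and observe that, with $|\Omega|$ fixed and unary costs, the maximal accumulated cost along a length-$(|\Omega|\cdot|V|)$ prefix is $M\cdot|\Omega|\cdot|V|$, so the number of distinct cost vectors (and hence the size of each $D(u)$) is bounded by a polynomial of the form $(M\cdot|\Omega|\cdot|V|)^{|\Omega|}$. Your version is somewhat more careful than the paper's — in particular about enumerating action vectors despite the circuit encoding and about the $\infty$ coordinates outside $W$ — but the key counting argument is identical.
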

\begin{proof}
	Since $|\Omega|$ is fixed, the size of $\Game^\star$ is $O(|V|)$. By Lemma~\ref{lem: simple outcome enough}, every cost vector $\vec{c}\in \NEset(v_0)$ is attained by a profile whose outcome accumulates costs only along a path of length $|\Omega|\cdot |V|$. For unary weights, the cost along a simple path is polynomial (in fact, linear), in the length of the path.
	That is, if $M$ is the maximal cost in $\Game$, the maximal cost a player can accumulate along such a path is $M\cdot |\Omega|\cdot |V|$, which is polynomial in the description of $\Game$. It follows that $|\NEset(v_0)|\le (M\cdot |\Omega|\cdot |V|)^|\Omega|$, which is polynomial.
	
	Thus, the number of updates that are done in every iteration of the algorithm is polynomially bounded, and we conclude that the runtime of the algorithm is polynomial.
\end{proof}

In the next case we make the following restrictions: first, the game is a \emph{joint target} game, meaning that $F_\alpha=T$ is the same set for all players $\alpha\in \Omega$, and second, that the game has \emph{uniform costs}, meaning that the cost of every transition is the same, and w.l.o.g. is $\vec{1}$.

\begin{theorem}
	\label{thm: single reach uniform polynomial time}
	Computing $\NEset(v_0)$ for joint-target games with uniform costs can be done in polynomial time.
\end{theorem}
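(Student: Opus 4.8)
The plan is to exploit the drastic collapse of the cost structure that joint targets and uniform costs force. Since every player has the same target set $T$ and every transition costs $\vec{1}$, along any outcome all players accumulate cost at the same rate and stop simultaneously when $T$ is first reached. Hence every finite-cost outcome that first reaches $T$ after $\ell$ transitions has cost vector exactly $\ell\cdot\vec{1}$, and an outcome that never reaches $T$ has cost $\infty\cdot\vec{1}$. First I would record two consequences. (i) All finite NE cost vectors are of the form $\ell\cdot\vec{1}$, and these are totally ordered by $\le$; together with $\ell\cdot\vec{1}\le \infty\cdot\vec{1}$, this shows that $\NEset(v_0)$ contains at most one vector --- the one with the smallest achievable $\ell$ if a finite NE exists, and $\infty\cdot\vec{1}$ only if no finite NE exists but the all-losing profile is an NE. (ii) In the $F$-expanded game $\Game^\star$ the component $S$ jumps directly from $\emptyset$ to $\Omega$ when $T$ is reached, so the only reachable winner sets are $W=\emptyset$ and $W=\Omega$; thus the exponential iteration over subsets $W$ in the algorithm of Section~\ref{subsec: algorithm} reduces to two cases.

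It then remains to compute the minimal $\ell$ for which there is an NE whose outcome first reaches $T$ at step $\ell$. By Theorem~\ref{thm: compute punishing game value} I first compute $C_\alpha(u)$ for every player $\alpha$ and state $u$ in polynomial time. Specializing the criterion of Theorem~\ref{thm: NE deviation criteria} to uniform costs, a path $v_0,\dots,v_\ell$ that first visits $T$ at $v_\ell$ is the outcome of an NE iff for every position $0\le j<\ell$, every player $\alpha$, and every action $\vec{b}$ obtained from the played action by changing only $\alpha$'s move and leading to a state $u$, we have $j+1+C_\alpha(u)\ge \ell$. Note that a deviation reaching $T$ in fewer steps is automatically forbidden (there $C_\alpha(u)=0$), so the minimal feasible $\ell$ is genuinely the shortest NE outcome.

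The algorithm I would run is a layered reachability search: for each candidate length $\ell=1,\dots,|\Omega|\cdot|V|$ (the bound from Corollary~\ref{cor: k cycle outcome enough}), build a layered graph on pairs $(v,j)$ and keep an edge $(v,j)\to(\delta(v,\vec{a}),j+1)$ only when $\vec{a}$ satisfies $j+1+C_\alpha(u)\ge\ell$ for all single-player deviations to states $u$; then test whether $(v_0,0)$ reaches a pair $(t,\ell)$ with $t\in T$. The smallest $\ell$ for which this succeeds yields the unique Pareto-optimal finite NE $\ell\cdot\vec{1}$, using that any continuation after $T$ is reached is cost-free and safe (Lemma~\ref{lem: safe transitions}). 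If no $\ell$ succeeds, I would separately check whether $\infty\cdot\vec{1}$ is an NE and output it or $\emptyset$ accordingly. Each layered search is polynomial and there are polynomially many values of $\ell$, so the whole procedure runs in polynomial time.

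The main point --- and the only place requiring care --- is the first paragraph's observation that uniformity plus a joint target forces every outcome's cost to be a single scalar $\ell$ replicated across all coordinates. This is what simultaneously kills both sources of exponential blow-up in the general algorithm: the number of relevant winner sets drops from $2^{|\Omega|}$ to $2$, and the set $D(u)$ of cost vectors at each state, instead of being exponentially large, is indexed only by path length and hence has polynomial size. Equivalently, one may simply invoke the algorithm of Section~\ref{subsec: algorithm} and observe that under these restrictions every $D(u)$ stays polynomial, which is an alternative way to present the same argument.
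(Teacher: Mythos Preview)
Your proposal is correct, but it takes a considerably more elaborate route than the paper. The paper's proof is essentially one sentence: since every outcome that first reaches $T$ after $\ell$ transitions has cost $\ell\cdot\vec{1}$, the unique Pareto-optimal NE cost is $\ell^*\cdot\vec{1}$ where $\ell^*$ is the ordinary shortest-path distance from $v_0$ to $T$, and computing $\ell^*$ is a standard shortest-path problem.

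What you do differently is iterate over candidate lengths $\ell$ and, for each, run a layered reachability search that enforces the deviation inequality of Theorem~\ref{thm: NE deviation criteria}. This is sound and polynomial, but it misses the one-line shortcut: the cooperative shortest path is \emph{always} an NE, so the very first candidate $\ell=\ell^*$ already succeeds. The reason is that for any single-player deviation at step $j$ to a state $u$, one has $C_\alpha(u)\ge d(u)$ (the cooperative distance from $u$ to $T$, since an adversarial coalition can only make things worse for $\alpha$), and $v_0\to\cdots\to v_j\to u$ followed by a shortest path from $u$ to $T$ is itself a path of length $j+1+d(u)\ge \ell^*$; hence $j+1+C_\alpha(u)\ge \ell^*$ automatically. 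With this observation, your entire search over $\ell$, the layered graphs, and the separate treatment of $\infty\cdot\vec{1}$ all collapse to a single shortest-path computation.

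On the other hand, your version has the merit of making explicit why the answer is an NE (via Theorem~\ref{thm: NE deviation criteria}), whereas the paper's proof simply asserts this without argument. So your proof is more careful; it just does more work than necessary once the above inequality is noticed.
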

\begin{proof}
	Notice that under the premise, $\NEset(v_0)$ consists of a single cost vector $\vec{c}$, whose entries (which are all equal, since the costs are uniform) are the length of the shortest path from $v_0$ to $T$ (the joint target).
	
	Thus, computing $\NEset(v_0)$ reduces to finding the shortest path from $v_0$ to $T$, which can be done in polynomial time.
\end{proof}

\section{Hardness Results}
\label{sec:hardness}
In this section we complete the complexity picture of computing NE, by providing hardness results. Since we need a decision version of the problem, we use the most restricted version of the problem, namely deciding whether an NE exists in a game.

In Section~\ref{subsec: polynomial algorithm} we consider restrictions based on the following parameters: the number of players (fixed or not), the cost function (uniform, unary, or binary), and whether there is a single target. In this section, we provide tight hardness results to match the upper bounds in Section~\ref{sec: computing NE}. Our results are summarized in Table~\ref{tab: complexity}.

\begin{table*}[t]
	\begin{center}
	\begin{tabular}{|c|c|c|c|}
		\hline
		\backslashbox{Players}{Costs}& Uniform & Unary & Binary \\ 
		\hline 
		Fixed & \PTIME & \PTIME & \NPh (ST) \\ 
		\hline 
		Not fixed & \PTIME (ST), \NPh & \NPh (ST) & \NPh (ST) \\ 
		\hline 
	\end{tabular} 
	\end{center}
	\caption{Complexity of NE existence. (ST) stands for Single Target games. Note that hardness for a fixed number of player applies already for 2 players.}
	\label{tab: complexity}
\end{table*}

\begin{theorem}
	\label{thm: hardness 2 players binary}
	The problem of deciding whether a game has an NE is \NPh for games with 2 players and binary costs, even for single-target games.
\end{theorem}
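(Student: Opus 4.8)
The plan is to reduce from \textsc{Subset-Sum}: given positive integers $a_1,\ldots,a_n$ and a target $t$, decide whether some $S\subseteq\{1,\ldots,n\}$ satisfies $\sum_{i\in S}a_i=t$. Since costs are written in binary, the weights $a_i$ (and $t$) can be placed directly on transitions, so the produced game has polynomial size; the restriction to two players and a single shared target is what makes the encoding delicate.

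I would build a game $\Game$ in two phases. In a \emph{selection phase} consisting of a chain $s_0,\ldots,s_n$, the transition from $s_{i-1}$ to $s_i$ is an XOR-gadget in the style of Figure~\ref{Fig:Eg1}: the two \emph{agreeing} pairs $(a,a),(b,b)$ realise ``item $i$ excluded'' and the two \emph{disagreeing} pairs $(a,b),(b,a)$ realise ``item $i$ included'', with edge costs drawn from $\{0,a_i\}$ so that including item $i$ charges Player~$1$ exactly $a_i$ while excluding it charges Player~$2$ exactly $a_i$. Thus along any outcome of the selection phase Player~$1$ accumulates $x=\sum_{i\in S}a_i$ and Player~$2$ accumulates $A-x$, where $A=\sum_i a_i$ and $S$ is the selected subset, and, crucially, neither player unilaterally controls $S$ (flipping either player's action at step $i$ toggles inclusion). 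The chain then enters an \emph{arbitration gadget} $g$ leading to the shared target $T$, built around a threat/XOR structure as in Figure~\ref{Fig:Eg2}, whose edge costs are calibrated against $t$ so that the \emph{local} stability of $g$ flips precisely at $x=t$.

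The correctness statement I would aim for is: $\Game$ has an NE iff the split $(x,A-x)=(t,A-t)$ is attainable in the selection phase, i.e. iff the \textsc{Subset-Sum} instance is positive. By Corollary~\ref{cor: k cycle outcome enough} it suffices to examine lasso outcomes, and by Theorem~\ref{thm: NE deviation criteria} a candidate outcome is an NE iff for every player, prefix, and single-action deviation leading to a state $u$, the accumulated-plus-deviation cost together with the punishment value $C_\alpha(u)$ is at least that player's cost along the outcome. I would first compute, using Theorem~\ref{thm: compute punishing game value} together with the XOR structure, the punishing values $C_1,C_2$ at every state of both phases; the arbitration gadget is designed so that the only way to reach $T$ without one player being able to undercut the other through the $C_\alpha$-bound is for the accumulated split to be exactly $(t,A-t)$. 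The completeness direction is then a direct verification of the inequalities of Theorem~\ref{thm: NE deviation criteria}: a witnessing subset yields the NE that plays the corresponding selection and then the cooperative exit of $g$.

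The hard part will be the soundness direction and, underlying it, the exact calibration of the arbitration gadget. Because the subset sum $x$ is a \emph{sunk} cost by the time the outcome reaches $g$, the deviation incentives at $g$ cannot ``see'' $x$ directly; the still-live threat in $g$ must be entangled with the quantities $a_i$ so that some player retains a profitable XOR-style flip at $g$ exactly when $x\neq t$, while being deterred by the punishment value when $x=t$. Establishing this requires ruling out \emph{every} lasso outcome when no subset hits $t$ --- including outcomes that leave the intended path, revisit the selection chain, or route Player~$2$'s actions adversarially --- by showing that in each case the deviation inequality of Theorem~\ref{thm: NE deviation criteria} is violated for Player~$1$ or Player~$2$. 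Carrying out this case analysis, and in particular verifying that the computed values $C_1(u),C_2(u)$ make the knife-edge at $x=t$ sharp, is the main obstacle; the remaining bookkeeping (polynomiality of the construction and the binary encoding of the weights) is routine, and yields \NPh for two players with binary costs on single-target games.
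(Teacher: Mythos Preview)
Your overall framework---reduce from a subset-sum/partition problem, use a chain of XOR gadgets so that the two players' accumulated costs are $(x,A-x)$, and then force $x$ to equal the target---matches the paper's approach. The gap is in the arbitration gadget, and it is not merely a matter of ``carrying out the case analysis'': placing the arbitration \emph{after} the selection chain cannot work, for exactly the sunk-cost reason you flag.

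Concretely, once the play reaches $g$, the state $g$ carries no information about $x$. The deviation inequality of Theorem~\ref{thm: NE deviation criteria} at any position from $g$ onward has the form
\[
\cost_\alpha(\pi_{[0,j]}) + \cost_\alpha(v_j,\vec{b}) + C_\alpha(u) \;\ge\; \cost_\alpha(\pi),
\]
and the accumulated prefix cost $\cost_\alpha(\pi_{[0,j]})$ appears additively on both sides (it is part of $\cost_\alpha(\pi)$ as well), so it cancels. Thus whether a deviation at or after $g$ is profitable is completely independent of $x$: your gadget either admits a stable continuation for every $x$ or for none. Deviations \emph{within} the selection chain do see $x$, but an XOR flip there just toggles one $a_i$, and there is no way to calibrate a fixed end-gadget so that for \emph{every} $x\neq t$ some such single flip is strictly profitable while for $x=t$ none is. ``Entangling the threat in $g$ with the $a_i$'' cannot rescue this, because $C_\alpha(u)$ is a function of the state $u$ alone.

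The paper resolves this by moving the arbitration to the \emph{start}: before any selection cost is paid, an initial XOR lets either player divert the play to a short escape branch on which that player can guarantee cost exactly $S$ (with $2S=\sum_i x_i$). This makes $S$ an upper bound on either player's cost in any NE; since the selection chain forces the two costs to sum to $2S$, both must equal $S$, which is attainable iff an equal partition exists. A punishment edge at the end of the chain (cost $(S{+}2,S{+}2)$ if either player defects) then ensures no in-chain deviation is profitable. The key missing idea in your plan is this placement of the escape option \emph{before} the costs accumulate.
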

\begin{proof}
	We show that the problem is hard by a reduction from the \NPh problem PARTITION: decide, given a set of natural numbers $\set{x_1,\ldots,x_n}$ encoded in binary, whether there exists a set $I\subseteq\set{1,\ldots,n}$ such that 
	\[
	\sum_{i\in I}x_i=\sum_{i\notin I}x_i.
	\]
	
	The main ingredient in the reduction is a component similar to Example~\ref{xmp: exp many NE}. This component consists of $n$ XOR games, where in XOR game $i$, the players incur cost of either $(0,x_i)$ or $(x_i,0)$. Thus, the players partition the numbers between them. In order to ensure that the only possible NE corresponds to an equal partition, another XOR-based component is used, which allows any player to deviate and incur a total cost of slightly more than $\sum_{i=1}^n x_i/2$. By carefully choosing the costs, this makes the only possible NE have value $\sum_{i=1}^n x_i/2$ for both players, which is possible iff there is an equal partition. We now proceed to give the detailed construction.

	Given an instance for PARTITION, we start by assuming all numbers are even (this can easily be achieved by multiplying by $2$). Let $2S=\sum_{i=1}^n x_i$, then the problem is equivalent to deciding whether there exists $I\subseteq\set{1,\ldots,n}$ such that $\sum_{i\in I}x_i=S$.
	
	We construct the following game $\Game = \tup{\Omega, V,  \Act, s, \delta,\cost,F}$. The set of players is $\Omega=\set{0,1}$. The states are $\set{v_1,\ldots,v_n}\cup\set{s,t_1,t_2,r_1,r_2}$. The actions are $\Act_0=\Act_1=\set{0,1}$. The initial state is $s$, the target sets are $F_0=F_1=\set{t_2,t_3}$. 
	We now turn to describe the transition function and the costs. 
	%We now intuitively explain the game, the formal definition and proof can be found in Appendix~\ref{apx: 2 player binary game}.
	%\todo{add figure}.
	
	The game starts at state $s$. There, the players play a XOR game to determine whether the game proceeds to $t_1$ or to $v_1$. No costs are incurred so far.
	At $t_1$, the players again play a XOR game that reaches $t_2$ and ends the game. The costs are either $(0,1)$ or $(1,0)$. Observe that if the game proceeds along $s\to t_1\to t_2$, the outcome costs are either $(S+1,S)$ or $(S,S+1)$. Moreover, either player can swap between these costs in the XOR game.
	
	If the game proceeds to $v_1$, then the outcome goes through the sequence of states $v_1\to v_2\to\ldots v_n\to r_1\to r_2$ as follows. At each state $v_i$, the players play a XOR game that reaches $v_{i+1}$ (or $r_1$ from $v_n$). The cost of the transition is either $(x_i,0)$ or $(0,x_i)$. This is similar to the construction in Example~\ref{xmp: exp many NE}.
	Intuitively, the players decide which of them sums $x_i$, thus partitioning the numbers between them into two disjoint sets. Finally, at $r_1$, if both players agree, they proceed to $r_2$ with cost $(0,0)$, and if either of them does not agree, they proceed to $r_2$ with cost $(S+2,S+2)$, where $S+2$ can be thought of as $\infty$, as it is larger than any other possible outcome.
	
	Clearly the reduction can be done in polynomial time. We now claim that there exists a partition of the instance iff the game has an NE, and moreover -- if there exists an NE, its outcome cost is $(S,S)$. 
	
	Observe that any NE in the game must have cost at most $S$ for either player. Indeed, any player can deviate to the path $s\to t_1\to t_2$ and play the XOR game at $t_1$ to guarantee cost at most $S$.

	Conversely, no NE can have cost less than $S$ for a player. Indeed, a cost of less than $S$ can only be attained along the path to $r_2$, and by the transition, it follows that if in outcome $\pi$ we have $\cost_1(\pi)=c$, then $\cost_2(\pi)=2S-c$. Thus, if one of the players has cost less than $S$, the other player has cost more than $S$, which we showed is not an NE.
	
	Thus, if there exists an NE, it has cost $(S,S)$. It remains to show that there exists an NE iff there exists a partition.
	
	For the first direction, assume there exists a partition $I\subseteq \set{1,\ldots,n}$ such that 
	\[
	\sum_{i\in I}x_i=\sum_{i\notin I}x_i=S.
	\]
	We show that there exists an NE with cost $(S,S)$. The outcome of the NE is a path to $r_2$, where at each state $v_i$, the players play the XOR game such that the cost is $(x_i,0)$ if $i\in I$, and $(0,x_i)$ if $i\notin I$. Finally, at $r_1$, they go to $r_2$ with cost $(0,0)$. Clearly this outcome has cost $(S,S)$. Note that no player has an incentive to deviate toward $t_2$ regardless of the strategy. In order to make sure no player deviates along the path to $r_2$, we note that the second-strike strategies along the path to $r_2$ can use the transition with costs $(S+2,S+2)$ to make sure no deviation is beneficial.

	Conversely, assume there does not exist a partition of the instance, then it is easy to see that no outcome can give cost $(S,S)$, and by the above, there cannot be an NE.
\end{proof}

\begin{theorem}
	\label{thm: hardness n players unary}
	The problem of deciding whether a game has an NE is \NPh for games with unary costs, even for single-target games.
\end{theorem}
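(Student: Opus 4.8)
The plan is to reduce from 3-SAT, mirroring the structure of the reduction in Theorem~\ref{thm: hardness 2 players binary} but replacing the magnitude-based gadget (PARTITION) by a combinatorial one whose hardness stems from the \emph{number} of players rather than the size of the costs. Given a formula $\varphi=C_1\wedge\cdots\wedge C_m$ over variables $x_1,\ldots,x_n$, I would introduce one player $\alpha_i$ per variable together with a constant number of auxiliary players, all sharing a single common target $T$ (keeping the game single-target). The game is laid out along a line, as in Example~\ref{xmp: exp many NE}: an \emph{assignment} phase, where an XOR-style gadget at state $v_i$ selects the value of $x_i$, followed by a \emph{verification} phase that walks through the clauses. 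All transition costs are small constants, so they can be written in unary; the only source of hardness is the exponential space of assignments induced by the $n$ players. In the variable gadgets I would make the two players that resolve each XOR \emph{indifferent} between the branches (equal cost on both), so that, unlike the game of Figure~\ref{Fig:Eg1}, both truth values are individually sustainable, letting a profile encode an arbitrary assignment without any player wishing to flip merely for its own sake.

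For the direction ``satisfiable $\Rightarrow$ NE exists'', given a satisfying assignment I build the outcome that picks the corresponding branch in every variable gadget and, in each clause gadget, exits cheaply through a literal the assignment makes true. To certify this is an NE I invoke the characterization of Section~\ref{sec: characterization of NE}: by Lemma~\ref{lem: second strike enough for NE} and Theorem~\ref{thm: NE deviation criteria} it suffices to show every single-player deviation is punishable. As with the $(S+2,S+2)$ edge in the proof of Theorem~\ref{thm: hardness 2 players binary}, I equip each gadget with a high-cost \emph{trap} branch into which the coalition can steer any deviator, so that the punishing value $C_\alpha(u)$ at the post-deviation state is large enough (or $\infty$) to make the deviation non-beneficial; Theorem~\ref{thm: compute punishing game value} guarantees these values are well defined and attained by memoryless strategies.

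For the converse I must show that if $\varphi$ is unsatisfiable then \emph{no} profile is an NE. Every outcome induces a (partial) assignment, and unsatisfiability forces some clause gadget to be entered with all its literals false. I would arrange that in exactly this situation the cheap exit is unavailable, so the player routed through that gadget strictly prefers to deviate toward the low-cost escape (the analogue of the path $s\to t_1\to t_2$), and this deviation cannot be punished by the coalition. Hence every candidate lasso from Corollary~\ref{cor: k cycle outcome enough} admits a beneficial deviation, so $\NEset(v_0)=\emptyset$.

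The main obstacle is the joint design of the clause/consistency gadget, which must reconcile two tensions simultaneously: a satisfying assignment must yield a \emph{stable} profile (every deviation punishable), while an unsatisfiable formula must leave \emph{every} outcome unstable, and all of this with a polynomially sized game. In particular the chosen assignment cannot be recorded in the state space on pain of an exponential blow-up, so the delicate point is to enforce, purely through the equilibrium condition and the coalition-punishment mechanism of Section~\ref{subsec: game against m}, that each variable is used consistently across its occurrences and that satisfaction of all clauses is precisely the obstruction to a profitable deviation---using only unary costs.
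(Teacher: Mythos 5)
You have correctly identified the source problem (3SAT) and the right proof machinery (second-strike profiles, the punishment values $C_\alpha(u)$, the deviation criterion of Theorem~\ref{thm: NE deviation criteria}), but the proof is not actually there: you explicitly defer ``the joint design of the clause/consistency gadget,'' and that gadget is the entire content of the theorem. Worse, the architecture you sketch --- a separate \emph{assignment phase} in which variable gadgets fix the truth values, followed by a \emph{verification phase} that walks the clauses --- runs head-on into the obstacle you yourself name at the end: for the verification phase to know which branch of each variable gadget was taken, the assignment must either be recorded in the state space (exponential blow-up) or communicated through the equilibrium condition in some way you do not specify. Making the XOR players indifferent between branches, as you propose, makes this harder rather than easier, since then nothing ties the branch taken in the assignment phase to what happens later.

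The paper's construction dissolves this tension by having \emph{no} assignment phase at all. A single special Player $0$ walks through the clauses $1,\ldots,m$ and in each clause state picks one of its three literals; each variable $x_i$ gets two players $\top_i$ and $\bot_i$ whose only role is to police consistency. At a literal state $\tup{x_i,j}$, Player $\top_i$ can ``stop'' the game (reaching its target at cost $1$, and charging Player $0$ a cost of $1$), or let it continue, in which case Player $\bot_i$ is charged $2$; dually for $\tup{\neg x_i,j}$. If Player $0$ ever selects both $x_i$ and $\neg x_i$, the player penalized at the \emph{later} occurrence could have stopped the game at the \emph{earlier} one for cost $1$ instead of $2$ --- a profitable deviation --- so any NE forces Player $0$'s literal choices to be consistent, hence a satisfying assignment; and if someone does stop the game, Player $0$ (paying $1$ instead of $0$) can unilaterally reroute, so that is not an NE either. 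Consistency is thus enforced purely by deviation incentives, with costs in $\set{0,1,2}$, and no state ever needs to remember the assignment. Your proposal would need to be reworked along these lines (or supply an equally concrete alternative gadget) before it constitutes a proof.
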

\begin{proof}
	
	The result is obtained by a reduction from 3SAT, based on a quantitative variant of the reduction in~\cite[Theorem 2]{AAK15}. For completeness, we give the complete construction.
	
	We show a reduction is from 3SAT. 	Consider a formula $\phi=c_1\wedge\ldots\wedge c_m$ over the variables $x_1,\ldots,x_n$, where each clause is of the form $c_i=(\ell^i_1\vee \ell^i_2 \vee \ell^i_3)$ with each $\ell^j_k$ being a variable or its negation.
	
	We construct a game $\Game = \tup{\Omega, V,  \Act, 1, \delta,\cost,F}$ as follows: the players are $\Omega=\set{0,\top_1,\bot_1,\ldots,\top_n,\bot_n}$. That is, each variable $x_i$ is associated with two players, $\top_i$ and $\bot_i$, and there is a special Player $0$.
	
	The states are
	\begin{align*}
	V&=\{1,\ldots,m,m+1\}\cup\{\top_1,\bot_1,\ldots,\top_n,\bot_n\}\\
	&\cup\{\tup{x_i,j},\tup{\neg x_i,j}: 1\le i\le n,\ 1\le j\le m\}.
	\end{align*}
	%\[V=\set{1,\ldots,m,m+1}\cup\set{\top_1,\bot_1,\ldots,\top_n,\bot_n}\cup\set{\tup{x_i,j},\tup{\neg x_i,j}: 1\le i\le n,\ 1\le j\le m}.\]
	Note that $m+1$ is a special state that does not correspond to a clause. 
	The target sets are $\set{\top_1,\bot_1,\ldots,\top_n,\bot_n,m+1}$ for all players. 
	
	We now turn to describe the transitions, actions, and costs. See Figure
	\todo{figure}
%	\begin{figure}[ht]
%		\centering
%		\includegraphics[width=1\linewidth, trim= {0cm 8cm 0cm 8cm},clip]{reduction}
%		\caption{The construction. Costs are depicted with arrows (e.g. $0\from 1$ means Player $0$ pays $1$). The targets have double-circles.}
%		\label{fig:reduction}
%	\end{figure}

	The game starts at state $1$, corresponding to clause $c_1$. 
	\begin{itemize}
		\item In state $j$, for $1\le j\le m$, Player 0 chooses a state $\tup{\ell,j}$ for a literal $\ell$ that appears in $c_j$. That is, Player $0$ has 3 choices.\footnote{Note that our definition of $V$ includes states that are not reachable, namely literals that do not appear in the clause.}
		The actions of all the other players are ignored. The cost of this transition is $0$ for all players.
		
		\item In state $\tup{x_i,j}$, Player $0$ and Player $\top_i$ play a XOR game that can go either to state $j+1$ or to $\top_i$. Intuitively, Player $\top_i$ can either let Player 0 continue to the next clause, or ``stop'' the game and go to $\top_i$.
		
		The cost of the transition to $\top_i$ is $1$ for Player $\top_i$ and for Player $0$, and is $0$ for all other Players. Intuitively, Player $\top_i$ has to pay cost of $1$ for causing the game to deviate, and Player $0$ incurs a cost of $1$ for this deviation.
		
		The cost of the transition to $j+1$ is $2$ for Player $\bot_i$ and $0$ for all other players. Intuitively, $\bot_i$ is penalized, since $x_i$ was chosen to be true. Note that there is no deviation Player $\bot_i$ can take at this point.
		
		\item Dually, in state $\tup{\neg x_i,j}$ Player $0$ and Player $\bot_i$ play a XOR game that can go either to state $j+1$ or to $\bot_i$. Again, the cost of the transition to $\bot_i$ is $1$ for Player $\bot_i$ and for Player $0$, and is $0$ for all other Players, and the cost for the transition to $j+1$ has cost $2$ for $\top_i$.
	\end{itemize}
	
	We claim that $\phi$ is satisfiable iff there exists an NE in the game.
	
	For the first direction, assume $\phi$ is satisfiable, and let $\pi$ be a satisfying assignment (i.e., $\pi:\set{x_1,\ldots,x_n}\to \set{\top,\bot}$). We construct strategies for the Players as follows: At each state $j$ for $1\le j\le m$, Player $0$ chooses a literal $\ell$ that is assigned to true in $\pi$. Then, at each state of the form $\tup{x_i,j}$ or $\tup{\neg x_i,j}$, Player $0$ continues to state $j+1$. The strategies of the other players are such that they cooperate and never cause the game to end up in $\top_i$ or $\bot_i$. We claim this is an NE. First, observe that in this profile, Player $0$ has cost $0$, which is optimal. For the other players, consider a variable $x_i$. Since $\pi$ is a consistent assignment, then it cannot be the case that both $\tup{x_i,j}$ and $\tup{\neg x_i,j'}$ are visited, since one of them is false, and will not be chosen by Player $0$ in this profile. W.l.o.g. assume $\pi(x_i)=\top$, then $\tup{\neg x_i,j}$ is never visited for any $1\le j\le m$. This means that $\bot_i$ never gets to influence the game. In addition, the cost for Player $\top_i$ is 0, since no state of the form $\tup{\neg x_i,j'}$ is visited, and since Player $\top_i$ never deviates to state $\top_i$. Since a cost of $0$ is optimal, we conclude that the profile is an NE.
	
	Conversely, assume there exists an NE in the game. We claim that $\phi$ is satisfiable.
	
	Assume by way of contradiction that for some variable $x_i$, both $\tup{x_i,j}$ and $\tup{\neg x_i,j'}$ are visited for some $1\le j, j'\le m$. W.l.o.g. assume $j< j'$ (the case where $j'<j$ is symmetric). Consider the cost incurred by Player $\top_i$. Since $\tup{\neg x_i,j'}$ is visited after $\tup{x_i,j}$, it follows that Player $\top_i$ does not stop the game at $\tup{x_i,j}$. However, Player $\top_i$ incurs a cost of 2 when exiting $\tup{\neg x_i,j'}$. Therefore, if the profile is fixed, Player $\top_i$ can gain by deviating at $\tup{x_i,j}$ and stopping the game, and paying 1 instead of 2. 
	
	Thus, if the profile causes the game to end in state $m+1$, then it induces a consistent assignment. Moreover, since Player $0$ can only choose satisfying literals in each clause, we get that the induced assignment is satisfying. 
	
	It remains to show that the game does end in $m+1$. Indeed, otherwise some player $\top_i$ or $\bot_i$ stopped the game at some point. However, in this case Player $0$ has cost of $1$, and can deviate unilaterally to reroute the game back to $m+1$, so the profile is not an NE, in contradiction to the assumption. 
	
	This concludes the proof.
\end{proof}
%Theorems~\ref{thm: fixed players unary} and \ref{thm: hardness n players unary} complete the complexity picture for the parameters of number of players and the encoding of the costs. In addition, the reductions used in both proofs output games where the target sets are the same for all players. Interestingly, for such cases, the $F$-expanded game $\Game^*$ defined in Section~\ref{sec: characterization of NE} is of size linear in $\Game$. This shows that the size of $\Game^*$ is not the only factor in the complexity of finding NE.

%\subsection{Uniform Costs}
%\label{subsec:uniform costs}
%Consider the case where the target sets are the same for all players, and the costs are also \emph{uniform}. That is, e.g., $\cost(u,\vec{a})=\vec{1}$ for all transitions. In this case, it is not hard to see that there is always an NE, where all the players collaborate to reach $F$ along the shortest path. 
%The following theorem shows that even for uniform costs, if we remove the restriction on $F$, the problem remains \NPh.

\begin{theorem}
	The problem of deciding whether a game has an NE is \NPh for games with uniform costs.
	\label{thm: hardness uniform costs}
\end{theorem}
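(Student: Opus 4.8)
The plan is to reduce from 3SAT, reusing the overall architecture of the reduction in Theorem~\ref{thm: hardness n players unary}: a distinguished Player~$0$ walks through the clauses $1,\ldots,m$, choosing in each clause state $j$ a literal $\ell$ appearing in $c_j$ and moving to a literal state $\langle \ell,j\rangle$; each variable $x_i$ is guarded by two players $\top_i$ and $\bot_i$ who, in the corresponding literal states, play a XOR game with Player~$0$ that either lets the traversal continue to $j+1$ or ``stops'' it. As before, an NE should exist exactly when the traversal can be driven to the final state $m+1$ along a consistent, satisfying assignment, and I would again close the argument with the NE characterization of Theorem~\ref{thm: NE deviation criteria} and the punishment values $C_\alpha$ of Theorem~\ref{thm: compute punishing game value}.

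The essential new difficulty is that costs are now uniform: every transition charges $1$ to every player that has not yet reached its target, so I can no longer place a penalty (the ``cost $2$'' transitions of Theorem~\ref{thm: hardness n players unary}) on a single player. Since joint-target uniform games are solvable in polynomial time (Theorem~\ref{thm: single reach uniform polynomial time}), the reduction is forced to use genuinely distinct target sets, and this is exactly the tool I would exploit: I give each guard player its own target state and place these targets at carefully chosen distances, so that the cost differentials of the unary reduction are realized as differences in the length of each player's shortest surviving route to its own target. Concretely, I route the ``continue'' edge out of an inconsistently-used literal state through a short detour gadget, so that a guard forced to witness an inconsistency (both $\langle x_i,j\rangle$ and $\langle\neg x_i,j'\rangle$ visited) must travel strictly farther to its target than it could by unilaterally stopping the traversal earlier; and I make Player~$0$'s target reachable only by completing the walk to $m+1$, so that any early stop strictly increases Player~$0$'s distance.

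For correctness I would argue both directions against the NE characterization. In the forward direction, from a satisfying assignment I build the profile in which Player~$0$ always selects a true literal and every guard lets the traversal pass; here each player reaches its target along a shortest route, and every potential unilateral deviation is deterred by the memoryless second-strike strategies, which by Theorem~\ref{thm: compute punishing game value} can trap a deviator so that $C_\alpha(u)=\infty$. In the converse direction I argue that any NE must end at $m+1$ (otherwise Player~$0$, having paid for the stopping step, can reroute and strictly shorten its own distance, contradicting the NE condition) and must induce a consistent assignment (otherwise some guard $\top_i$ or $\bot_i$ faces the inconsistency detour and can strictly shorten its distance by stopping the game earlier); since Player~$0$ only ever chooses satisfying literals, the induced consistent assignment satisfies $\phi$.

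The main obstacle will be the \emph{selectivity} of penalties under uniform costs: because each step charges all unfinished players at once, the gadgets must be timed so that, whenever a particular guard is to be penalized, the remaining players have already reached their (distinct) targets and accrue no further cost, while the penalized guard is still active. Making the target placements and detour lengths line up so that exactly the intended comparison (stop versus continue) is reproduced as a strict distance inequality, and verifying that all required second-strike strategies remain memoryless and enforce infinite punishment where needed, is the crux of the argument; the remaining bookkeeping on path lengths is routine.
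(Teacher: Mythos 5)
Your plan diverges from the paper's proof at the root: the paper does not adapt the 3SAT reduction of Theorem~\ref{thm: hardness n players unary} at all, but instead reduces from HAMPATH, making the players the vertices of the input graph, giving each player $v$ the target $\set{v}$ (plus escape states), and calibrating a global ``escape'' path of length $2n+1$ so that an NE exists iff every vertex can be visited within the first $2n$ transitions --- i.e.\ iff a Hamiltonian path from $s$ exists. That choice is not cosmetic: under uniform costs a player's cost is exactly its arrival time at its own target, so any ``penalty'' you insert is a delay that is charged \emph{simultaneously to every player still active}. This is precisely the selectivity problem you flag in your last paragraph, and it is not bookkeeping --- it is the obstruction that makes the 3SAT architecture unworkable as described. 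A detour inserted after an inconsistently-used literal state $\langle\neg x_i,j'\rangle$ delays not just $\top_i$ but every guard $\top_{i'},\bot_{i'}$ whose variable has not yet been resolved; to charge only $\top_i$ you would need all other players to have already reached their (distinct) targets at that moment, but then those players have no remaining stake and their own consistency gadgets at later clauses collapse. Worse, the size of the penalty a given guard absorbs depends on how many detours fire between its literal's first occurrence and its target, i.e.\ on inconsistencies in \emph{other} variables --- exactly the global information the gadgets are supposed to detect locally. You have correctly identified the crux but not supplied the idea that resolves it.

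A second, more local error: you claim that in the forward direction every unilateral deviation is deterred because second-strike strategies can force $C_\alpha(u)=\infty$. That cannot be true for the deviations that matter. In the XOR game at $\langle x_i,j\rangle$, Player $\top_i$ can unilaterally flip the outcome so that the play moves to the state $\top_i$, which lies in $F_{\top_i}$; the deviation lands the deviator in its own target immediately, so no coalition punishment applies and the relevant comparison is between two \emph{finite} costs (stop now versus continue and possibly pay later). The entire content of the unary reduction is this finite comparison ($1$ versus $2$), and any uniform-cost analogue must reproduce it as a strict inequality of arrival times --- which brings you back to the unsolved selectivity problem. I would recommend abandoning the 3SAT scaffold and looking for a source problem, as the paper does with HAMPATH, in which ``everyone is delayed equally'' is the natural semantics and each player's individual deadline is what encodes the combinatorial constraint.
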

\begin{proof}
	We show a reduction from the \NPh problem HAMPATH: given a directed graph $G$ and a designated vertex $s$, decide whether there exists a Hamiltonian path starting from $s$ in $G$.
	
	Intuitively, given $G=\tup{V,E}$ where $V$ is a set of $n$ vertices and $E\subseteq V\times V$ are the edges, and given $s\in V$, the output of the reduction is a game $\Game$ whose players are $V$ and whose states contain $V$, as well as additional components. The target set of player $V$ contains the vertex $v$ (and some additional states of the game). Intuitively, the game proceeds as follows: if all the players agree on a path through the graph from $s$, then that path is taken. However, any player can deviate (and effectively ``stop'' the game) by incurring an overall cost of slightly more than $n$. In addition, we construct the game such that the only possible NE must induce a path through $G$ from $s$.
	
	If there is a Hamiltonian path from $s$, all players incur cost of at most $n$ on this path, so it is an NE. 
	Otherwise, every path either repeats a vertex, or misses a vertex. 
	In the former case, a player whose vertex is missed can surely gain by deviating and stopping the game. If no vertices are missed, then some vertex is visited for the first time after more than $n$ transitions. By carefully specifying the costs, we ensure that the corresponding player benefits by deviating and stopping the game.
	
	We proceed to give the complete details.
	
	Given $G=\tup{V,E}$ where $V$ is a set of $n$ vertices and $E\subseteq V\times V$ are the edges, and given $s\in V$, we construct a game $\Game = \tup{\Omega, S,  \Act, v_0, \delta,\cost,F}$ as follows. The players are the vertices of the graph, $\Omega=V$. The states of the game are $S=V\cup E\cup \set{q_0,q_1,\ldots,q_{2n+1}}\cup (\set{r_0,r_1,\ldots r_{2n+3}}\times V)$. The costs are uniformly $1$ on all transitions. The target sets are as follows: for every $v\in V=\Omega$, $F_v=\set{v,q_{2n-1},(r_{2n+3},v)}$.
	
	We now turn to describe the transitions and actions.
	\begin{itemize}
		\item The game starts at $q_0$. There, the players play an $n$-way XOR game to choose whether to proceed to $s$ (the designated vertex in the graph) or to $q_0$.
		\item From $q_0$ the game proceeds along the path $q_0\to q_1\to\ldots q_{2n+1}$ regardless of the actions. Note that the cost to all players in this outcome is $2n+1$ (as there are $2n+1$ transitions from $q_0$ to $q_{2n+1}$).
		\item For every state $v\in V$ (and in particular for $s$), each player chooses an edge $e=(v,u)\in E$ for some $u\in V$. If all players agree on the same edge, the game proceeds to state $e$, and then to $u$ regardless of the actions. 
		
		Otherwise, intuitively, each player $v\in V$ can cause the game to proceed to $(r_0,v)$. Formally, the action of each Player $v$ prescribes a number $f_v$, and the game proceeds to state $(r_0,f)$ where $f=\sum_{v\in V} f_v \mod n$ under an arbitrary enumeration of $V$. 
			
		\item From state $(r_0,v)$, the game proceeds along the path $(r_0,v)\to (r_1,v)\to\ldots (r_{2n+3},v)$ regardless of the actions. Note that since $(r_{2n+3},v)\in F_v\setminus\bigcup_{v'\neq v} F_{v'}$, then only Player~$v$ incurs a finite cost, and this cost is at least $2n+3$. 
	\end{itemize}

	We now claim that the game has an NE iff there exists a Hamiltonian path in $G$ from $s$. For the first direction, assume there exists a Hamiltonian path $s_1,s_2,\ldots,s_n$ in $G$ with $s=s_1$. Consider strategies for the players whose outcome is the path. That is, the vertices along the outcome are $q_0,s_1,(s_1,s_2),s_2,(s_2,s_3),\ldots (s_{n-1},s_n),s_n$. The length of this path is $2n$, and since this is a Hamiltonian path, each vertex is visited along it. So the cost for every player is at most $2n$. It is easy to observe that any deviation for any player would give cost of at least $2n+1$ to that player, so this is an NE.
	
	Conversely, consider an NE in the game. We claim that its outcome must describe a Hamiltonian path in $G$. First, notice that any outcome starting with a transition from $q_0$ to $q_1$ cannot be an NE, since Player $s\in V$ can always deviate and cause the play to reach $s$ after one step, thus getting cost $1$. 
	
	Second, the outcome cannot proceed to vertex $(r_0,v)$ for any Player $v$ - indeed, this causes all other players to incur cost $\infty$, in which case any one of them can deviate at $q_0$ and take the outcome to $q_{2n+1}$.
	
	Thus, the outcome must induce a path in $G$. Next, observe that all vertices must be visited along this path, as otherwise a player whose corresponding vertex is not visited (and hence incurs cost $\infty$) can deviate at $q_0$ to take the outcome to $q_{2n+1}$. Finally, we claim that all vertices are visited within the first $n$ vertices (i.e. within the first $2n$ transitions). Indeed - if vertex $v$ is visited after more than $2n$ transitions, then since every transition within $V$ includes an ``edge step'', the cost of Player $v$ is at least $2n+2$, in which case Player $v$ can deviate at $q_0$ to take the outcome to $q_{2n+1}$. 
	
	We conclude that the outcome of an NE induces a Hamiltonian path from $s$ in $G$.	
\end{proof}

\section{Price of Stability and Price of Anarchy}
\label{subsec: PoS}
Decentralized decision-making, in the form of concurrency, may lead to sub-optimal solutions from the point of view of society as a whole. This sub-optimality can be quantified by the concepts of \emph{price of stability (PoS)} and \emph{price of anarchy (PoA)}~\cite{ADKTWR08}, which we study in this section.

For a cost vector $\vec{c}=(c_\alpha)_{\alpha\in \Omega}$, we define the \emph{social utility} $\util(\vec{c})=\sum_{\alpha\in \Omega} c_\alpha$ to be the sum of the costs. We then define for a game $\Game$ the \emph{social optimum} $\SO(\Game)=\min_{P\in \strat_{\Omega}}\util(\cost(P))$ as the minimal possible social utility that can be attained in $\Game$. We assume w.l.o.g. $\SO(\Game)\ge 1$. This can be achieved by enforcing an initial transition with cost $1$ for all players.

Intuitively, the social optimum captures the value in case of a centralized authority. Let $\Upsilon$ denote the set of NE cost vectors in $\Game$, we then define $\PoS(\Game)=\frac{\min_{\vec{c}\in \Upsilon} \util(c)}{\SO(\Game)}$ and $\PoA(\Game)=\frac{\sup{\vec{c}\in \Upsilon} \util(c)}{\SO(\Game)}$. Intuitively, $\PoS(\Game)$ measures how much society losses from the lack of a centralized authority, under the assumptions that player will collaborate in a suggested NE (hence taking the ``best'', or minimal, NE). $\PoA(\Game)$ does not assume any collaboration, and so takes into account the ``worst'' NE. Clearly $1\le \PoS(\Game)\le \PoA(\Game)$, and the closer these values are to $1$, the more ``stable'' the game is.

As the following example shows, concurrent games with costs are not stable: $PoS(\Game)$ may be exponentially large, and $\PoA(\Game)$ may be infinite (even in games where NE exist).  Corollary~\ref{cor: k cycle outcome enough} in Section~\ref{sec: characterization of NE} implies that this bound is tight -- $\PoS(\Game)$ cannot be bigger than exponential.

\begin{example}
	Consider the 2-player game $\Game$ in Figure~\ref{fig: PoS}, which proceeds as follows. The actions are $\set{0,1}$ for both players. At state $s_0$, if both players choose $0$, the game proceeds to state $s_1$, where a XOR game is played to reach $s_2$ with cost either $(0,1)$ or $(1,0)$, and $s_2$ is a target for both players.
	
	\begin{figure}[ht]
		\centering
		\includegraphics[width=0.5\linewidth]{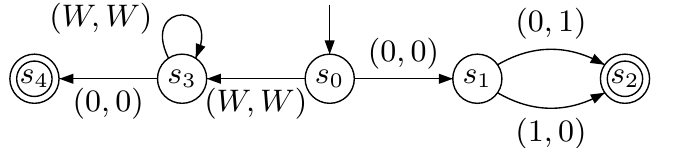}
		\caption{A game with exponential PoS and infinite PoA.}
		\label{fig: PoS}
	\end{figure}
	
	If any player plays $1$ at $s_0$, a cost of $(W,W)$ is incurred, for some number $W$ described in binary, and the game proceeds to state $s_3$. At state $s_3$, if both player play $0$, the game proceeds to $s_4$, which is a target for both player. 
	Otherwise, the game stays at $s_3$, and incurs another $(W,W)$ cost. 
	
	It is not hard to see that $\SO(\Game)=1$, by selecting some branch to $s_2$. However, the best NE is induced by the outcome $s_0,s_3,s_4$, which incurs cost $(W,W)$, and there are arbitrarily bad NEs attained by both players playing $1$ in state $s_3$. Thus, $\PoS(\Game)=2W$ and $\PoA(\Game)=\infty$.	
\end{example}
\section*{Related Work}
After preparing this paper, we discovered that the results overlap with~\cite{KLST12}. Our contribution differs by giving a polynomial-time algorithm for computing Pareto Optimal NE in certain fragments (Section~\ref{subsec: polynomial algorithm}), by giving refined hardness bounds (Section~\ref{sec:hardness}), and by studying the Price of Stability and Price of Anarchy (Section~\ref{subsec: PoS}).

\bibliographystyle{plain}
\bibliography{refs}

\end{document}